\newtheorem{theorem}{Theorem}
\newtheorem{lemma}{Lemma}
\newtheorem{example}{Example}
\newtheorem{remark}{Remark}
\newenvironment{proof}{\noindent \textbf{Proof:}}{\hfill $\Box$ \medskip}
\DeclareMathOperator{\N}{\mathbb{N}}
\renewcommand{\phi}{\varphi}
\newcommand{\vp}{\varphi}
\newcommand{\Unary}{\Sigma}
\newcommand{\unary}{\sigma}
\newcommand{\uP}[1]{P_{#1}}
\def \AA{\mathfrak{A}}
\def \BB{\mathfrak{B}}
\def \GG{\mathcal{G}}
\newcommand{\nbd}{D} %le bombre de data values.
\newcommand{\ifunct}{f_1}
\newcommand{\ofunct}{f_2}
\newcommand{\funi}{\ifunct}
\newcommand{\funo}{\ofunct}
\newcommand{\funct}[1]{f_{#1}}
\newcommand{\nData}[2]{\textup{Data}[{#2},{#1}]}
\newcommand{\Values}[1]{\Valuessub{#1}{A}}
\newcommand{\Valuessub}[2]{\mathit{Val}_{#1}(#2)}
\newcommand{\f}[1]{f_{#1}}
\newcommand{\rel}{\sim}
\newcommand{\relsaa}[5]{#4 \mathrel{_{#1}{\sim^{#3}_{#2}}} #5}
\newcommand{\relsaaord}[3]{{_{#1}{\sim^{#3}_{#2}}}}
\newcommand{\relsaord}[2]{{_{#1}{\sim_{#2}}}}
\newcommand{\rels}[4]{#3 \mathrel{_{#1}{\sim}{_{#2}}} #4}
\newcommand{\distaa}[3]{\mathit{d}^{#3}(#1,#2)}
\newcommand{\vprojr}[3]{#1|_{#2}^{#3}}
\newcommand{\datagraph}[1]{\GG(#1)}
\newcommand{\gaifmanish}[1]{\datagraph{#1}}
\newcommand{\Vertex}[1]{\mathit{V}_{#1}}
\newcommand{\Edge}[1]{\mathit{E}_{#1}}
\newcommand{\Ball}[3]{\mathit{B}_{#1}^{#3}(#2)}
\newcommand{\ndFO}[2]{\textup{dFO}[{#2},{#1}]}
\newcommand{\ndFOr}{\textup{dFO}}
\newcommand{\rndFO}[3]{#3\textup{-Loc-dFO}[#2,#1]}
\newcommand{\rndFOr}[1]{#1\textup{-Loc-dFO}}
\newcommand{\eFO}[3]{\exists\textup{-}{#3}\textup{-Loc-dFO}[{#2},{#1}]}
\newcommand{\eFOr}[1]{\exists\textup{-}{#1}\textup{-Loc-dFO}}
\newcommand{\qfFO}[3]{\textup{qf-}{#3}\textup{-Loc-dFO}[{#2},{#1}]}
\newcommand{\Pform}[2]{{#1}(#2)}
\newcommand{\Var}{\mathcal{V}}
\newcommand{\Intrepl}[2]{{I[#1/#2]}}
\newcommand{\problemtitle}[1]{\gdef\@problemtitle{#1}}% Store problem title
\newcommand{\probleminput}[1]{\gdef\@probleminput{#1}}% Store problem input
\newcommand{\problemquestion}[1]{\gdef\@problemquestion{#1}}% Store problem question
  \par\addvspace{0.8\baselineskip}
    \normalsize \textbf{Input:} & \normalsize \@probleminput \\[0.5ex]% Input
    \normalsize \textbf{Question:} & \normalsize \@problemquestion% Question
  \par\addvspace{0.8\baselineskip}
\DeclareMathOperator{\donc}{\;\Rightarrow\;}
\DeclareMathOperator{\et}{\wedge}
\DeclareMathOperator{\Et}{\bigwedge}
\DeclareMathOperator{\ou}{\vee}
\DeclareMathOperator{\Ou}{\bigvee}
\newsavebox{\@brx}
\newcommand{\llangle}[1][]{\savebox{\@brx}{\(\m@th{#1\langle}\)}%
  \mathopen{\copy\@brx\kern-0.5\wd\@brx\usebox{\@brx}}}
\newcommand{\rrangle}[1][]{\savebox{\@brx}{\(\m@th{#1\rangle}\)}%
  \mathclose{\copy\@brx\kern-0.5\wd\@brx\usebox{\@brx}}}
\newcommand{\locformr}[3]{\llangle {#1} \rrangle_{#2}^{ #3}}
\newcommand{\nDataSat}[2]{\ensuremath{\textsc{DataSat}(#1,#2)}}
\newcommand{\im}[1]{\mathit{Im}(#1)}
\newcommand{\phit}[1]{\sem{#1}}
\newcommand{\sem}[1]{\ensuremath{[\![#1]\!]}}
\newcommand{\AAas}{\sem{\AA}_{\tuple{a}}}
\newcommand{\tuple}[1]{\vec{#1}}
\newcommand{\Unaryp}{\Unary'}
\newcommand{\Udeci}[1]{\Gamma_{#1}}
\newcommand{\udd}[3]{a_{#1}[{#2},{#3}]}
\newcommand{\inj}{f_{\textup{new}}}
\newcommand{\fp}{f^p}
\newcommand{\phiBun}[1]{\phi_{#1,\Ball{1}{a_p}{}}^{}}
\newcommand{\phiBdeux}{\phi_{\Ball{2}{a_p}{}}^{}}
\newcommand{\phiBdsu}[1]{\phi_{#1,\Ball{2}{a_p}{}\setminus\Ball{1}{a_p}{}}^{}}
\newcommand{\phiun}{\phi_{j,k,a_p}^{r=1}}
\newcommand{\psiun}{\psi_{j,k,a_p}^{r=1}}
\newcommand{\phideux}{\phi_{j,k,a_p}^{r=2}}
\newcommand{\phitrois}{\phi_{j,k,a_p}^{r>2}}
\newcommand{\psideux}{\psi_{j,k,a_p}^{r>1}}
\newcommand{\phitran}{\phi_{\mathit{tran}}}
\newcommand{\phirefl}{\phi_{\mathit{refl}}}
\newcommand{\phiuniq}{\phi_{\mathit{uniq}}}
\newcommand{\phiwf}{\phi_{\mathit{wf}}}
\newcommand{\dout}{d_{\mathit{out}}}
\newcommand{\T}[1]{\ensuremath{[\![#1]\!]}}
\newcommand{\Tp}[1]{\ensuremath{[\![#1]\!]}_{x_p}}
\newcommand{\Tbis}[1]{\ensuremath{[\![#1]\!]'}}
\newcommand{\Tpbis}[1]{\ensuremath{[\![#1]\!]'}_{x_p}}
\newcommand{\psitran}{\psi_{\mathit{tran}}}
\newcommand{\psirefl}{\psi_{\mathit{refl}}}
\newcommand{\psiwf}{\psi_{\mathit{wf}}}
\newcommand{\comp}[1]{{A\setminus{#1}}}
\newcommand{\uge}{\mathsf{ge}}
\newcommand{\addge}[1]{#1_\uge}
\newcommand{\minusge}[1]{#1_{\setminus\uge}}
\newcommand{\AAge}{\addge{\AA}}
\newcommand{\AAminusge}{\minusge{\AA}}
\tikzstyle{data}=[rectangle split,rectangle split parts=2,draw,text
\tikzstyle{dataredblue}=[rectangle split,rectangle split parts=2,draw,text centered, rectangle split part fill={red!30,blue!20}]
\tikzstyle{databluered}=[rectangle split,rectangle split parts=2,draw,text centered, rectangle split part fill={blue!20,red!30}]
\tikzstyle{datablueblue}=[rectangle split,rectangle split parts=2,draw,text centered, rectangle split part fill={blue!20,blue!20}]
\tikzstyle{dataredred}=[rectangle split,rectangle split
\tikzstyle{datawhite}=[rectangle split,rectangle split parts=2,draw,text centered]
\tikzstyle{data}=[rectangle split,rectangle split parts=2,draw,text centered, rectangle split part fill={blue!20,blue!20}]
\tikzstyle{dataredblue}=[rectangle split,rectangle split parts=2,draw,text centered, rectangle split part fill={red!30,blue!20}]
\tikzstyle{databluered}=[rectangle split,rectangle split parts=2,draw,text centered, rectangle split part fill={blue!20,red!30}]
\tikzstyle{datablueblue}=[rectangle split,rectangle split parts=2,draw,text centered, rectangle split part fill={blue!20,blue!20}]
\tikzstyle{dataredred}=[rectangle split,rectangle split
\tikzstyle{datawhite}=[rectangle split,rectangle split parts=2,draw,text centered]
\tikzstyle{dataone}=[rectangle split,rectangle split parts=1,draw,text centered]
\title{On the Existential Fragments of Local First-Order Logics with Data}
\author{Benedikt Bollig
\institute{CNRS, LMF, ENS Paris-Saclay \\Universit{\'e} Paris-Saclay, France}
\and
Arnaud Sangnier
\institute{IRIF, Universit\'e  Paris Cit\'e \\CNRS,France}
\and
Olivier Stietel
\institute{CNRS, LMF, ENS Paris-Saclay \\ Universit{\'e} Paris-Saclay,
  France\\IRIF, Universit\'e  Paris Cit\'e \\CNRS,France}
}
\begin{document}
\maketitle

\begin{abstract}We study first-order logic over unordered structures
  whose elements carry a finite number of data values from an
infinite domain which can be compared wrt. equality. As the
satisfiability problem for this logic is undecidable in
general, in a previous work, we have introduced a family of local fragments that restrict quantification to neighbourhoods of a
given reference point. We provide here the precise
complexity characterisation of the satisfiability problem for the existential fragments of this local
logic depending on the number of data values carried by each element
and the radius of the considered neighbourhoods.
\end{abstract}

\section{Introduction}

First-order data logic has emerged to specify properties involving infinite data domains. Potential applications include XML reasoning and the specification of concurrent systems and distributed algorithms. The idea is to extend classic mathematical structures by a mapping that associates with every element of the universe a value from an infinite domain. When comparing data values only for equality, this view is equivalent to extending the underlying signature by a binary relation symbol whose interpretation is restricted to an equivalence relation.

Data logics over word and tree structures were studied in \cite{BojanczykMSS09,BojanczykDMSS11}. In particular, the authors showed that two-variable first-order logic on words has a decidable satisfiability problem. Other types of data logics allow \emph{two} data values to be associated with an element \cite{Kieronski05,KieronskiT09}, though they do not assume a linearly ordered or tree-like universe. Again, satisfiability turned out to be decidable for the two-variable fragment of first-order logic. Other notable extensions, either to multiple data values or to totally ordered data domains, include \cite{KaraSZ10,DeckerHLT14,ManuelZ13,Tan14}.

When considering an arbitrary number of first-order variables, which we do in this paper, the decidability frontier is quickly crossed without further constraints as soon as the  number of allowed data in gretar then two \cite{Janiczak-Undecidability-fm53}. One of the restrictions we consider here is locality, an essential concept in first-order logic. It is well known that first-order logic is only able to express local properties: a first-order formula can always be written as a combination of properties of elements that have limited, i.e., bounded by a given radius, distance from some reference points \cite{Han65,Gai82}. In the presence of (several) data values, imposing a corresponding locality restriction on a logic can help ensuring decidability of its satisfiability problem.

In previous work, we considered a local fragment of first-order data logic over structures whose elements (i) are unordered (as opposed to, e.g., words or trees), and (ii) each carries two data values. We showed that the fragment has a decidable satisfiability problem when restricting local properties to radius~1, while it is undecidable for any radius greater than 1.

In the present paper, we study orthogonal local fragments where
global quantification is restricted to being existential
(while quantification inside a local property is still unrestricted).
We obtain decidability for (i) radius 1 and an arbitrary number of data values, and
for (ii) radius 2 and two data values. In all cases, we provide tight complexity upper and lower bounds.
Moreover, these results mark the exact decidability frontier: satisfiability is undecidable as soon as we consider radius 3 in presence of two data values, or radius 2 together with three data values.

To give a possible application domain of our logic, consider distributed algorithms running on a cloud of processes. Those algorithms are usually designed to be correct independently of the number of processes executing them. Every process gets some inputs and produces some outputs, usually from an infinite domain. These may include process identifiers, nonces, etc. Inputs and outputs together determine the behavior of a distributed algorithm. A simple example is leader election, where every process gets a unique id, whereas the output should be the id of the elected leader and so be the same for all processes. To formalize correctness properties and to define the intended input-output relation, it is hence essential to have suitable data logics at hand.

\paragraph{Outline.}
The paper is structured as follows.
In Section~\ref{sec:preliminaries}, we recall important notions such as structures and first-order logic, and we introduce the local fragments considered in this paper. Section~\ref{sec:decidability} presents the decidable cases, whereas, in Section~\ref{sec:undecidability}, we show that all remaining cases lead to undecidability.

\medskip

\noindent
This work was partly supported by the project ANR FREDDA (ANR-17-CE40-0013).

\section{Structures and first-order logic}
\label{sec:preliminaries}

\subsection{Data Structures}

We define here the class of models we are interested in. It consists of sets of nodes containing data values with the assumption that each node is labeled by a set of predicates and carries the same number of values. We consider hence $\Unary$ a finite set of unary relation symbols (sometimes
called unary predicates) and an integer $\nbd \geq 0$. A \emph{$\nbd$-data structure} over $\Unary$ is a tuple 
$\AA=(A,(P_{\unary})_{\unary \in \Unary},\f{1},\ldots,\f{\nbd})$
% $\AA=(A,\ifunct,\ofunct,(P_{\unary})_{\unary \in \Unary})$
(in the following, we simply write $(A,(P_{\unary}),\f{1},\ldots,\f{\nbd})$)
where $A$ is a nonempty finite set,
$P_\unary \subseteq A$ for all $\unary \in \Unary$, and
$\f{i}$s are mappings $A \to \N$.
Intuitively $A$ represents the set of nodes and $f_i(a)$ is the $i$-th data value carried by $a$ for each node $a \in A$.
% If $\nbd=2$ then the input
% value of process $a \in A$ is $\ifunct(a)$ and its output
% value $\ofunct(a)$.
For $X\subseteq A$, we let $\Valuessub{\AA}{X} = \{\f{i}(a) \mid a \in X, i\in\{1,\ldots,\nbd\}\}$.
% and for $X=A$, we write $\Values{\AA}$ for $\Valuessub{\AA}{A}$.
The set of all $\nbd$-data structures over $\Unary$
is denoted by $\nData{\nbd}{\Unary}$.

While this representation is often very convenient to represent
data values, a more standard way of
representing mathematical structures is in terms of binary
relations.
For every $(i,j) \in \{1,\ldots,\nbd\} \times \{1,\ldots,\nbd\}$, the mappings
$\f{1},\ldots,\f{\nbd}$ determine a binary relation
${\relsaaord{i}{j}{\AA}} \subseteq A \times A$
as follows:
$\relsaa{i}{j}{\AA}{a}{b}$ iff $\funct{i}(a) = \funct{j}(b)$.
We may omit the superscript $\AA$ if it is clear from the context
and if $\nbd=1$, as there will be only one relation, we way may write $\rel$ for $\relsaord{1}{1}$.

\subsection{First-Order Logic}

Let $\Var = \{x,y,\ldots\}$ be a
countably infinite set of variables. The set $\ndFO{\nbd}{\Unary}$ of first-order formulas interpreted over $\nbd$-data structures
over $\Unary$ is inductively given by the grammar
$\vp ::= \Pform{\unary}{x} \mid \rels{i}{j}{x}{y} \mid x=y \mid \vp \vee \vp \mid \neg \vp \mid \exists x.\vp$,
where $x$ and $y$ range over $\Var$, $\unary$ ranges over $\Unary$, and $i,j \in \{1,\ldots,\nbd\}$.
We use standard abbreviations such as $\wedge$ for conjunction and
$\Rightarrow$ for implication.
We write $\phi(x_1,\ldots,x_k)$ to indicate that the free variables of $\phi$ are among
$x_1,\ldots,x_k$. We call $\phi$ a \emph{sentence} if it does not contain free variables.

For $\AA=(A,(P_{\unary}),\f{1},\ldots,\f{\nbd}) \in \nData{\nbd}{\Unary}$ and a formula $\phi\in\ndFO{\nbd}{\Unary}$,
the satisfaction relation $\AA \models_I \phi$ is defined wrt.\
an interpretation function $I: \Var \to A$. The
purpose of $I$ is to assign an interpretation to every (free)
variable of $\phi$ so that $\phi$ can be assigned a truth value.
For $x \in \Var$ and $a \in A$, the interpretation function $\Intrepl{x}{a}$
maps $x$ to $a$ and coincides
with $I$ on all other variables.
We then define:
\begin{center}
\begin{tabular}{ll}
$\AA \models_I \Pform{\unary}{x}$ if $I(x) \in P_{\unary}$ &
$\AA \models_I \phi_1 \vee \phi_2$ if $\AA \models_I \phi_1$ or $\AA \models_I \phi_2$\\
$\AA \models_I \rels{i}{j}{x}{y}$ if $\relsaa{i}{j}{\AA}{I(x)}{I(y)}$~~~ &
$\AA \models_I \neg \phi$ if $\AA \not\models_I \phi$\\
$\AA \models_I x = y$ if $I(x) = I(y)$ &
$\AA \models_I \exists x.\phi$ if there is $a \in A$ s.t. $\AA \models_{\Intrepl{x}{a}} \phi$
\end{tabular}
\end{center}

Finally, for a data structure $\AA=(A,(P_{\unary}),\f{1},\ldots,\f{\nbd})$,  a formula $\phi(x_1,\ldots,x_k)$ and $a_1,\ldots,a_k\in A$,
we write $\AA\models\phi(a_1,\ldots a_k)$ if there exists an interpretation function $I$ such that $\AA\models_{I[x_1/a_1]\ldots[x_k/a_k]} \phi$.  In particular, for a sentence $\phi$, we write $\AA\models\phi$ if there exists an interpretation function $I$ such that $\AA \models_I  \phi$.

% Note that, if $\nbd=0$, we get first order logic over unary predicates and if $\nbd=1$ we get first order logic over unary predicates and one equivalence relation.
% Novelty arise when $n\geq2$. \todo{ find sources }

\begin{example}\label{ex:leader-election}
Assume a unary predicate $\mathrm{leader} \in \Unary$.
The following formula from $\ndFO{2}{\Unary}$ expresses
correctness of a leader-election algorithm: (i)~there is a unique
process that has been elected leader, and (ii)~all processes agree,
in terms of their output values (their second data), on
the identity (the first data) of the leader: 
$ \exists x. (\mathrm{leader}(x) \et \forall y. \big(\mathrm{leader}(y)
\Rightarrow y=x)\big) \et \forall y. \exists x. (\mathrm{leader}(x) \et \rels{1}{2}{x}{y})$.
\end{example}

We are interested here in the satisfiability problem for these logics.
Let $\mathcal{F}$ denote a generic class of first-order formulas, parameterized
by $\Unary$ and $\nbd$. %and $\Binary$.
In particular, for $\mathcal{F} = \ndFOr$,
we have that $\mathcal{F}[\Unary,\nbd]$ is the class $\ndFO{\nbd}{\Unary}$. The satisfiability problem for $\mathcal{F}$ wrt.\ $\nbd$-data structures
is defined as follows:

\begin{center}
\begin{decproblem}
	\problemtitle{\nDataSat{\mathcal{F}}{\nbd}}
  \probleminput{A finite set $\Unary$ and a sentence $\vp \in \mathcal{F}[\Unary,\nbd]$.}
	\problemquestion{Is there $\AA \in \nData{\nbd}{\Unary}$ such that $\AA \models \vp$\,?}
\end{decproblem}
\end{center}

The following negative result (see \cite[Theorem~1]{Janiczak-Undecidability-fm53}) calls for restrictions of the general logic.
\begin{theorem}\cite{Janiczak-Undecidability-fm53}\label{thm:undec-general}
	The problem $\nDataSat{\ndFOr}{2}$ is undecidable, even when we require that $\Unary = \emptyset$ and we do not use $\relsaord{1}{2}$ and $\relsaord{2}{1}$ in the considered formulas.
\end{theorem}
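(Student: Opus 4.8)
Since $\Unary = \emptyset$ and the considered formulas may use only $\relsaord{1}{1}$ and $\relsaord{2}{2}$, such a $2$-data structure is nothing but a finite set carrying two equivalence relations: $\relsaord{1}{1}$ groups the elements that share their first value, and $\relsaord{2}{2}$ those that share their second value. I would prove undecidability by reducing the halting problem of Turing machines to $\nDataSat{\ndFOr}{2}$, associating with every machine $M$ a computable sentence $\vp_M \in \ndFO{2}{\emptyset}$ that has a (finite) model if and only if $M$ halts on the empty input. As a halting run is finite, the intended model is a finite structure, which is exactly what the problem asks for; hence a decision procedure for $\nDataSat{\ndFOr}{2}$ would decide halting.

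The plan is to encode a halting computation of $M$ as a two-dimensional space-time grid whose cell $(t,p)$ records the symbol at tape position $p$ at time $t$, together with the control state and the head position. The two equivalence relations play the roles of the two grid directions: $\relsaord{1}{1}$ links the cells of a common row (a fixed time step) while $\relsaord{2}{2}$ links the cells of a common column (a fixed tape position). Imposing the uniqueness constraint $\forall x \forall y.\,\big(\rels{1}{1}{x}{y} \et \rels{2}{2}{x}{y} \Rightarrow x = y\big)$ then forces each cell to be determined by its row class and column class. The sentence $\vp_M$ would assert that the model is such a grid, that its first row spells the initial configuration, that every pair of vertically adjacent rows respects the transition relation of $M$ (a local condition relating a cell to its vertical neighbour and to the horizontal neighbours of that neighbour), and that some row carries a halting state.

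The hard part, and the reason genuine first-order machinery is needed, is to realise two things using only $\relsaord{1}{1}$, $\relsaord{2}{2}$ and equality, with no unary predicates whatsoever: (a) a rigid grid skeleton carrying definable horizontal- and vertical-successor relations, and (b) the finite tape alphabet and state set, which one would ordinarily store in unary predicates. For (a), the obstruction is that an equivalence relation yields only an unordered grouping of a row's cells, with no intrinsic adjacency or order, whereas the transition checks require genuine successors; manufacturing such definable successors on top of two unordered partitions, by means of auxiliary coordinate and link elements and the full unbounded-variable quantification available here (and crucially unavailable in two-variable fragments), is the real crux. For (b), I would dispense with unary predicates by attaching to each cell a small definable gadget whose combinatorial type---for instance the size of an associated equivalence class, which first-order logic can test up to any fixed bound---encodes the cell's symbol and state. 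I expect the main difficulty to lie in carrying out (a) and (b) simultaneously: one must both guarantee that every alleged model is really a rectangular grid, excluding the degenerate configurations that so weak a signature might otherwise admit, and ensure that the gadgets encoding content do not disturb the equivalence classes used for the skeleton.
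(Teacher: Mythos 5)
The paper does not prove this statement at all: it is imported verbatim from Janiczak's 1953 paper, so there is no in-paper argument to compare against. Your reduction-from-halting plan via a space-time grid, with $\relsaord{1}{1}$ as the row partition, $\relsaord{2}{2}$ as the column partition, and the constraint $\forall x\forall y.(\rels{1}{1}{x}{y}\et\rels{2}{2}{x}{y}\Rightarrow x=y)$ pinning each cell to a (row class, column class) pair, is the standard modern strategy for this kind of result, and the direction you take it in is sensible. Note also that you correctly handle the finiteness issue: since $\nDataSat{\ndFOr}{2}$ is a \emph{finite} satisfiability problem, reducing from halting (rather than non-halting) is the right choice, as a halting run yields a finite model.

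However, the proposal has a genuine gap: the two steps you flag as ``the real crux'' --- (a) carving definable horizontal and vertical successor relations out of two unordered partitions, and (b) encoding the tape alphabet and state set with no unary predicates --- are not merely the hard parts, they \emph{are} the theorem. A pure two-equivalence-relation grid admits every row permutation (compatible with the column structure) as an automorphism, so ``row $t+1$'' is not definable without extra symmetry-breaking structure; and the natural fix of a link element lying in two distinct row classes is impossible, since classes of a single equivalence relation are disjoint, so the links must be threaded through the other relation in a staircase pattern whose correctness (in particular, that every finite model of the axioms really decomposes into grid cells plus links plus gadgets, with each part first-order recognizable without predicates) is exactly what needs to be verified. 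Likewise, encoding a symbol by the cardinality of ``an associated equivalence class'' conflicts with the fact that a cell's two classes are already spoken for as its row and column, so the gadgets must be disjoint auxiliary elements definably attached to cells --- another construction you name but do not give. As it stands this is a correct proof plan with its essential constructions deferred, not a proof; to complete it you would need to exhibit the axioms for (a) and (b) explicitly and prove the model-to-computation direction, or else simply cite the known result as the paper does.
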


\subsection{Local First-Order Logic and its existential fragment}

%\subsection{Local First-Order Logic}

%\todo[inline]{citer libkin}
%\todo[inline]{rem: lien avec le graphe de gaifman (relevement)}

We are interested in logics combining the advantages of  $\ndFO{\nbd}{\Unary}$,
%in Sections~\ref{sec:normal-form} and \ref{sec:extendedFO2}
while preserving decidability. With this in mind, we have introduced in \cite{bollig-local-fsttcs21}, for the case of two data values, a \emph{local} restriction, where the scope of quantification in the presence of free variables is restricted to the view of a given element. We present now the defintion of such restrictions adapted to the case of many data values.

%\subsubsection{Neighborhoods}

First, the view of a node $a$ includes all elements whose distance to $a$ is bounded by a given radius.
It is formalized using the notion of a Gaifman graph (for an
introduction, see~\cite{Libkin04}).
We use here a variant that is
suitable for our setting and that we call \emph{data graph}. %Fix sets $\Unary$ and $\Binary$.
Given a data structure  $\AA=(A,(P_{\unary}),\f{1},\ldots,\f{\nbd}) \in \nData{\nbd}{\Unary}$, we define its \emph{data graph} $\gaifmanish{\AA}=(\Vertex{\gaifmanish{\AA}},\Edge{\gaifmanish{\AA}})$ with set of vertices $\Vertex{\gaifmanish{\AA}} = A \times\{1,\ldots,\nbd\}$ and set of edges
$\Edge{\gaifmanish{\AA}} = \{ ((a,i),(b,j)) \in \Vertex{\gaifmanish{\AA}} \times \Vertex{\gaifmanish{\AA}} \mid a=b$  or $\rels{i}{j}{a}{b} \}$.
%\end{definition}
Figure \ref{fig:gaifman-a} provides an example of the graph
$\gaifmanish{\AA}$ for a data structure with $2$ data values.

%\begin{figure*}[t]
%\includegraphics{schema01_model.pdf}
%\includegraphics{schema01_graph}
%\caption{A structure $\AA$ and its associated graph $\gaifmanish{\AA}$\label{fig:gaifman}}
%\end{figure*}

\newcommand{\selfconnectionright}[1]{\draw[<->, line width=0.7pt] (#1.one east) .. controls +(.4,0) and +(.4,0) .. (#1.two east);}
\newcommand{\selfconnectionleft}[1]{\draw[<->, line width=0.7pt] (#1.one west) .. controls +(-.4,0) and +(-.4,0) .. (#1.two west);}

\begin{figure*}[t]
\centering
	\begin{subfigure}[b]{0.45\textwidth}
\begin{tikzpicture}[node distance=2cm]
	\node [data, label=below left:$a$]                (A)    {1
      \nodepart{two} 2 };
	\node [data, above left of=A,xshift=-1em,label=below:$b$]    (B)    {1 \nodepart{second} 3};
	\node [data, above right of=A,xshift=1em,label=below right:$c$]    (C)    {3 \nodepart{second} 2};
	\node [dataredred, below left of=A,label=below:$d$]    (D)    {5 \nodepart{second} 6};
	\node [dataredred, above right of=B,xshift=1em, label=below:$e$]
    (E)    {4 \nodepart{second} 3};
    \node [data, below right of=A, label=below:$f$]    (F)    {2 \nodepart{second} 7};
	%\node [dataredred, below right of=C]    (F)    {6 \nodepart{second} 6};

	\draw[line width=0.7pt,<->] (A.one north) .. controls +(0,.5) and
    +(.5,0).. (B.one east);
	%\draw[line width=0.7pt,<->] (D.one north).. controls +(0,1) and
    %+(-.5,0).. (B.south west);
	\draw[line width=0.7pt,<->] (B.two east) .. controls +(2,-0.5) and
    +(-2,.5).. (C.one west);
    \draw[line width=0.7pt,<->] (E.two east) .. controls +(0,0) and
    +(0,0.5).. (C.one north);
    \draw[line width=0.7pt,<->] (E.south west) .. controls +(0,0) and
    +(0.5,.2).. (B.two east);
	%\draw[line width=0.7pt] (D.one east) edge[<->] (A.one west);
	%\draw[line width=0.7pt,<->] (B.one north) .. controls +(0,.5) and +(-.5,0).. (E.one west);
	%\draw[line width=0.7pt,<->,in=45] (C.one north) .. controls +(0,.5) and +(1,.5).. (E.north east);
	\draw[line width=0.7pt,<->] (A.south east) .. controls +(1,-.5)
    and +(0,0).. (C.south west);
    %\draw[line width=0.7pt,<->] (D.south west) .. controls +(1,-.5) and +(0,-.5).. (E.north west);
    \draw[line width=0.7pt,<->] (A.south) .. controls +(0,-0.5) and
    +(0,0).. (F.one west);
    \draw[line width=0.7pt,<->] (F.north) .. controls +(0,0) and
    +(0,0).. (C.south);
	\selfconnectionright{A};
	\selfconnectionleft{B};
	\selfconnectionright{C};
	\selfconnectionleft{D};
	\selfconnectionleft{E};
	\selfconnectionright{F};

  \end{tikzpicture}
\caption{A data structure $\AA$ and  $\gaifmanish{\AA}$.}
\label{fig:gaifman-a}
	\end{subfigure}
	\unskip\ \vrule\ \hspace{2em}
	\begin{subfigure}[b]{0.45\textwidth}
\begin{tikzpicture}[node distance=2cm]
		\node [data, label=below left:$a$]                (A)    {1
      \nodepart{two} 2 };
	\node [data, above left of=A,xshift=-1em,label=below:$b$]    (B)    {1 \nodepart{second} 3};
	\node [data, above right of=A,xshift=1em,label=below right:$c$]    (C)    {3 \nodepart{second} 2};
	\node [dataredred, below left of=A,label=below:$d$]    (D)    {10 \nodepart{second} 11};
	\node [dataredred, above right of=B,xshift=1em, label=below:$e$]
    (E)    {8 \nodepart{second} 9};
    \node [data, below right of=A, label=below:$f$]    (F)    {2 \nodepart{second} 7};

  \end{tikzpicture}
\caption{$\vprojr{\AA}{a}{2}$: the $2$ view of $a$}
\label{fig:gaifman-b}
\end{subfigure}
\caption{
	% \caption{On the left: a data structure $\AA$ and  $\gaifmanish{\AA}$
% (the blue nodes represent $\Ball{2}{a}{\AA}$.
% On the right: $\vprojr{\AA}{a}{2}$.
\label{fig:gaifman}}
\end{figure*}

\medskip

We then define the distance $\distaa{(a,i)}{(b,j)}{\AA} \in \N \cup \{\infty\}$ between two elements $(a,i)$ and $(b,j)$ from $A \times\{1,\ldots,\nbd\}$ as the length of the shortest path from $(a,i)$ to $(b,j)$ in $\gaifmanish{\AA}$. 
%Given $X \subseteq \Vertex{\gaifmanish{\AA}}$ and $(a,i) \in \Vertex{\gaifmanish{\AA}}$, we define the distance $\distaa{X}{(a,i)}{\AA}$ by $\min_{(b,j)\in X} \distaa{(b,j)}{(a,i)}{\AA}$.
%The radius-$r$-ball around  $X$ is the set $\Ball{r}{X}{\AA} = \{ (a,i)\in\Vertex{\gaifmanish{\AA}} \mid \distaa{X}{(a,i)}{\AA}\leq r\}$.
%That is, it contains the elements of $\Vertex{\gaifmanish{\AA}}$ that can be reached from some element of $X$ through a path of length at most $r$.
For $a \in A$ and $r \in \N$, the \emph{radius-$r$-ball around} $a$ is
the set $\Ball{r}{a}{\AA} = \{ (b,j)\in\Vertex{\gaifmanish{\AA}} \mid
\distaa{(a,i)}{(b,j)}{\AA}\leq r $ for some $i \in
\{1,\ldots,\nbd\}\}$. This ball contains the elements of $\Vertex{\gaifmanish{\AA}}$ that can be reached from $(a,1),\ldots,(a,\nbd)$ through a path of length at most $r$.
On Figure~\ref{fig:gaifman-a} the blue nodes represent $\Ball{2}{a}{\AA}$.

We now define the $r$-view of an element $a$ in the $D$-data structure
$\AA$. Intuitively it is a $D$-data structure with the same elements as $\AA$
but where the data values which are not in the radius-$r$-ball around $a$ are
changed with new values all different one from each other. 
Let $\inj: A \times \{1,\ldots,\nbd\} \to \N \setminus \Values{\AA}$
be an injective mapping. The \emph{$r$-view of $a$ in $\AA$} is the
structure $\vprojr{\AA}{a}{r} = (A,(P_{\unary}),\f{1}',\ldots,\f{n}')
\in \nData{\nbd}{\Unary}$ where its universe is the same as the one of $\AA$ and the unary predicates
stay the same and $\funct{i}'(b)= \funct{i}(b)$ if $(b,i)
\in\Ball{r}{a}{\AA}$, and $\funct{i}'(b)= \inj((b,i))$
otherwise. 
%Finally, $P_{\unary}'$ is the restriction of $P_{\unary}$to $A'$.
On Figure~\ref{fig:gaifman-b},
the structure $\vprojr{\AA}{a}{2}$ is depicted where the values of the red nodes, not belonging to  $\Ball{2}{a}{\AA}$  have been replaced by
fresh values not in $\{1,\ldots,7\}$.

%\subsubsection{Local First-Order Logic}

We are now ready to present the logic $\rndFO{\nbd}{\Unary}{r}$, where $r \in \N$, interpreted over structures from
$\nData{\nbd}{\Unary}$. It is given by the grammar
\begin{align*}
	\vp ~&::=~ \locformr{\psi}{x}{r} \;\mid\; x=y \;\mid\; \exists x.\vp \;\mid\;  \vp \vee \vp \;\mid\;  \neg \vp
\end{align*}
where $\psi$ is a formula from $\ndFO{\nbd}{\Unary}$
with (at most) one free variable $x$. This logic uses the \emph{local
modality} $\locformr{\psi}{x}{r}$ to specify that the formula $\psi$
should be interpreted over the $r$-view of the element associated to
the variable $x$.
% For a formula $\locformr{\psi}{x}{r}$, its free variable is exactly $x$, whether $x$ is free in $\psi$ or not.
For $\AA \in \nData{\nbd}{\Unary}$ and an interpretation function $I$,
we have indeed
$\AA \models_I \locformr{\psi}{x}{r}$ iff $\vprojr{\AA}{I(x)}{r} \models_I \psi$.

\begin{example}
We now illustrate what can be specified by formulas in
the logic $\rndFO{2}{\Unary}{1}$. We can rewrite the formula from Example~\ref{ex:leader-election} so that
it falls into our fragment as follows:
$\exists x. (\locformr{\mathrm{leader}(x)}{x}{1} \et \forall
y. \linebreak[0](\locformr{\mathrm{leader}(y)}{y}{1} \Rightarrow x=y))
\et \forall
y. \linebreak[0] \locformr{\exists x. \mathrm{leader}(x) \et \rels{2}{1}{y}{x}
}{y}{1} $.
The next formula specifies an algorithm in which all processes suggest a value and then choose a new value among those that have been suggested at least twice:
  $\forall x.\locformr{\exists
    y.\exists z. y \neq z \et \rels{2}{1}{x}{y} \et \rels{2}{1}{x}{z} }{x}{1} $. We can also specify partial renaming, i.e., two output values agree only if their input values are the same:
  $\forall x.\locformr{\forall y.(\rels{2}{2}{x}{y}\donc\rels{1}{1}{x}{y}}{x}{1}$.
Conversely, the formula $\forall x.\locformr{\forall y.(\rels{1}{1}{x}{y}\donc\rels{2}{2}{x}{y}}{x}{1}$ specifies partial fusion of equivalences classes.
%  \item $\forall x.\locform{\forall y.\Pform{(2,2)}{y}\donc\Pform{(1,1)}{y}}{x}$ $\et \exists x.\locform{\exists y.\Pform{(1,1)}{y} \et \neg\Pform{(2,2)}{y}}{x}$. This formula express the same thing before and express progress.
\end{example}

In \cite{bollig-local-fsttcs21}, we have studied the decidability
status of   the satisfiability problem for $\rndFO{2}{\Unary}{r}$ with $r \geq 1$ and we have shown
that \nDataSat{$\rndFOr{2}$}{2} is undecidable and that
\nDataSat{$\rndFOr{1}$}{2} is decidable when restricting the
formulas (and the view of elements) to binary relations belonging to
the set $\{\rels{1}{1}{}{},\rels{2}{2}{}{},\rels{1}{2}{}{}\}$. Whether
\nDataSat{$\rndFOr{1}$}{2} in its full generality is decidable or not
remains an open problem.

We wish to study here the existential fragment of
$\rndFO{\nbd}{\Unary}{r}$ (with $r \geq 1$ and $D \geq 1$) and
establish when its satisfiability problem is 
decidable. This fragment, denoted by $\eFO{\nbd}{\Unary}{r}$, is given by the grammar
\[\phi ~::=~ \locformr{\psi}{x}{r} \;\mid\; x=y \;\mid\; \neg(x=y) \;\mid\; \exists x.\phi \;\mid\; \phi\ou\phi  \;\mid\; \phi\et\phi \]
where $\psi$ is a formula from $\ndFO{\nbd}{\Unary}$
with (at most) one free variable $x$. The quantifier free fragment $\qfFO{\nbd}{\Unary}{r}$ is
defined by the grammar 
$\phi ~::=~ \locformr{\psi}{x}{r} \;\mid\; x=y \;\mid\; \neg(x=y)
\;\mid\; \phi\ou\phi  \;\mid\; \phi\et\phi $.

\begin{remark}
  Note that for both these
fragments, we do not impose any restrictions on the use of quantifiers in
the formula $\psi$ located under the local modality
$\locformr{\psi}{x}{r}$.
\end{remark}

\section{Decidability results}
\label{sec:decidability}
We show here decidability of  $\nDataSat{\eFOr{2}}{2}$ and,
for all $\nbd \geq 0$,
$\nDataSat{\eFOr{1}}{\nbd}$.

\subsection{Preliminary results: 0 and 1 data values}

We introduce two preliminary results we shall use in this
section to obtain new decidability results. First, note that
formulas in $\ndFO{0}{\Unary}$ (i.e. where no data is considered)
correspond to first order logic formulas with a set of predicates and
equality test as a unique relation. As mentioned in  Chapter 6.2.1 of
\cite{borger-classical-springer97}, these formulas belong to the
\emph{L\"owenheim class with equality} also called as the relational
monadic formulas, and their satisfiability problem is in
\textsc{NEXP}. Furthermore, thanks to \cite{etessami-first-ic02} (Theorem 11), we know  that this latter problem is \textsc{NEXP}-hard even
if one considers formulas which use only two variables.

\begin{theorem}\label{thm:0fo}
 $\nDataSat{\ndFOr}{0}$ is \textsc{NEXP}-complete.
\end{theorem}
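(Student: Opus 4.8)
The plan is to establish the two bounds separately, matching the upper and lower bounds claimed in the theorem.

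For the \textsc{NEXP} upper bound, I would observe that formulas in $\ndFO{0}{\Unary}$ are exactly relational monadic first-order formulas with equality: the signature consists of the unary predicates in $\Unary$ together with the equality relation, and a $0$-data structure is just a finite set $A$ with an interpretation of each $P_\unary \subseteq A$. As noted in the text, this is the L\"owenheim class with equality, whose satisfiability problem lies in \textsc{NEXP} by Chapter~6.2.1 of \cite{borger-classical-springer97}. Concretely, the key fact is that such formulas enjoy a small (exponential-size) model property: the number of distinct ``$1$-types'' (maximal consistent conjunctions of unary predicates and their negations) is at most $2^{|\Unary|}$, and one can show that if $\vp$ is satisfiable then it has a model in which at most exponentially many elements realise each $1$-type, bounding the total model size by $2^{\mathit{poly}(|\vp|)}$. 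A nondeterministic machine then guesses such a model and verifies $\AA \models \vp$ in time polynomial in the (exponential) model size, giving membership in \textsc{NEXP}. I would invoke \cite{borger-classical-springer97} for this rather than reprove the small-model bound.

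For the matching \textsc{NEXP} lower bound, I would reduce directly from the satisfiability problem of the two-variable relational monadic class, which is \textsc{NEXP}-hard by Theorem~11 of \cite{etessami-first-ic02}. Since any two-variable relational monadic formula is already a formula of $\ndFO{0}{\Unary}$ over the same set $\Unary$ of unary predicates (equality is the sole relation and is available in our syntax via $x=y$), and a finite relational structure over a purely monadic signature is precisely a $0$-data structure, the identity map is a (trivial, hence polynomial-time) reduction. The hardness result of \cite{etessami-first-ic02} even restricts to two variables, so no extra work is needed to transport it into our setting.

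The main subtlety to get right is the bookkeeping of the correspondence between ``$0$-data structures over $\Unary$'' as defined in this paper and the ``structures over a purely monadic relational signature with equality'' treated in the cited references; once one checks that these two notions coincide (a $0$-data structure has no data-induced relations $\relsaord{i}{j}$ since the index set $\{1,\ldots,\nbd\}$ is empty, leaving only the $P_\unary$ and equality), both the upper and lower bounds follow immediately by citing the stated external results. I do not anticipate a genuine obstacle here, as both directions are essentially a matter of identifying the model classes and quoting \cite{borger-classical-springer97} and \cite{etessami-first-ic02}.
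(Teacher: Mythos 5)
Your proposal matches the paper's argument exactly: the paper also identifies $\ndFO{0}{\Unary}$ with the relational monadic class (L\"owenheim class with equality), cites Chapter 6.2.1 of \cite{borger-classical-springer97} for the \textsc{NEXP} upper bound, and cites Theorem~11 of \cite{etessami-first-ic02} for \textsc{NEXP}-hardness of the two-variable fragment. Your additional remarks on the exponential model property and on checking that $0$-data structures coincide with purely monadic relational structures are correct elaborations of the same route.
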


In \cite{Mundhenk09}, the authors study the satisfiability problem for
Hybrid logic over Kripke structures where the transition relation is
an equivalence relation, and they show that it is
\textsc{N2EXP}-complete. Furthermore in  \cite{Fitting12}, it is shown
that Hybrid logic can be translated to first-order logic in
polynomial time and this holds as well for  the converse
translation. Since $1$-data structures can be interpreted as Kripke
structures with one equivalence relation, altogether this allows us to
obtain the following preliminary result about the satisfiability
problem of $\ndFO{1}{\Unary}$.

% As $1$-data structures can be seen as a set equipped with an
% equivalence relation, the corresponding satisfiability problem has
% already been studied. Althought it was not directly in the setting of
% first order logic but in \emph{hybrid logic}, which is an extension of
% modal logic. It is established that the satisfiability problem for
% hybrid logic over equivalence relation is N2EXP-complete in the
% article \cite{Mundhenk09} (see Theorem 6). Furthermore, there is a traduction from hybrid logic to first order logic and vice versa which preserve satisfiability (described in the book \cite{Fitting12}). All this let us state that the satisfiability of first order logic over 1-data structure is \textsc{2NEXP}-complete

\begin{theorem}\label{thm:1fo}
 $\nDataSat{\ndFOr}{1}$ is \textsc{N2EXP}-complete.
\end{theorem}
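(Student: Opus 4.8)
The plan is to transfer the known \textsc{N2EXP}-completeness of Hybrid-logic satisfiability over equivalence Kripke structures (from \cite{Mundhenk09}) to our setting by exhibiting a semantics-preserving, effectively bijective correspondence between $1$-data structures and finite Kripke structures whose accessibility relation is an equivalence relation, and then composing this correspondence with the polynomial back-and-forth translations between Hybrid logic and first-order logic provided by \cite{Fitting12}.

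First I would make the structural correspondence precise. Given a $1$-data structure $\AA=(A,(P_\unary),\f{1})$, the induced relation $\rel$ (i.e.\ $\relsaord{1}{1}$) defined by $a \rel b$ iff $\f{1}(a)=\f{1}(b)$ is an equivalence relation, so $(A,(P_\unary),\rel)$ is a finite Kripke structure with equivalence accessibility relation and the $P_\unary$ as propositional valuations. Conversely, given such a Kripke structure $(A,(P_\unary),E)$ with $E$ an equivalence relation on the finite set $A$, I assign to each $E$-class a distinct natural number and let $\f{1}(a)$ be the number attached to the class of $a$; then $\f{1}(a)=\f{1}(b)$ iff $a\mathrel{E}b$, so this yields a $1$-data structure inducing exactly $E$. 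The two maps are mutually inverse up to the irrelevant choice of concrete data values, so a sentence is satisfiable over $1$-data structures iff it is satisfiable over finite equivalence Kripke structures when the single binary symbol is read as the accessibility relation.

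For the upper bound, I would take an input sentence $\vp \in \ndFO{1}{\Unary}$ and apply the polynomial translation from first-order logic to Hybrid logic of \cite{Fitting12}, obtaining a Hybrid formula $\vp'$ of polynomial size equivalent to $\vp$ on (equivalence) Kripke structures. By the correspondence above, $\vp$ is satisfiable over $1$-data structures iff $\vp'$ is satisfiable over equivalence Kripke structures, which by \cite{Mundhenk09} is decidable in \textsc{N2EXP}. For the lower bound I would run the reduction in the other direction: starting from a Hybrid formula, the converse polynomial translation of \cite{Fitting12} produces an equivalent first-order sentence over $1$-data structures, so the \textsc{N2EXP}-hardness of Hybrid-logic satisfiability over equivalence Kripke structures \cite{Mundhenk09} carries over to $\nDataSat{\ndFOr}{1}$. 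Combining the two bounds yields \textsc{N2EXP}-completeness.

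The main point to verify carefully is that both translations and the decision procedure respect the semantic restriction to equivalence relations: \cite{Fitting12} states its correspondence between Hybrid logic and first-order logic for general Kripke/relational structures, so I must check that it remains satisfiability-preserving when the models on both sides are constrained to interpret the single accessibility relation as an equivalence relation — which holds because the constraint is literally the same on each side of the translation. I also need to confirm that the notions of satisfiability match, namely that \cite{Mundhenk09} and our problem both effectively range over finite models (our $1$-data structures are finite and induce finitely many equivalence classes), and that the translations are genuinely of polynomial size, so that the complexity bounds are preserved in both directions.
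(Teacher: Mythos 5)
Your proposal follows exactly the paper's argument: the paper likewise observes that $1$-data structures are precisely finite Kripke structures whose accessibility relation is an equivalence relation, invokes the polynomial back-and-forth translations between Hybrid logic and first-order logic from \cite{Fitting12}, and transfers the \textsc{N2EXP}-completeness of Hybrid-logic satisfiability over equivalence Kripke structures from \cite{Mundhenk09} in both directions. Your additional remarks on preserving the equivalence-relation constraint and finiteness are sensible sanity checks but do not change the route.
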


\subsection{Two data values and balls of radius 2}

In this section, we prove that the satisfiability problem for the
existential fragment of local first-order logic with two data values and balls of radius two is decidable.
%It is stated in theorem \ref{thm:e22 decidable}.
To obtain this result we provide a reduction to the satisfiability
problem for first-order logic over $1$-data structures. Our reduction is based on the following intuition. Consider a
$2$-data structure $\AA=(A,(P_{\unary}),\f{1},\f{2}) \in
\nData{2}{\Unary}$ and an element $a \in A$. If we take an
element $b$ in $\Ball{2}{a}{\AA}$, the radius-2-ball around $a$, we
know that either $\f{1}(b)$ or $\f{2}(b)$ is a common value with
$a$. In fact, if $b$ is at distance $1$ of $a$, this holds by definition and 
if $b$ is
at distance $2$ then $b$ shares an element with $c$ at distance $1$ of
$a$ and this element has to be shared with $a$ as well so $b$ ends to
be at distance $1$ of $a$. The
trick consists then in using extra-labels for elements sharing a value with
$a$ that can be forgotten and to keep only the value of $b$ not
present in $a$, this construction leading to a $1$-data structure. It
remains to show that we can ensure that a $1$-data structure is the
fruit of this construction in a formula of $\ndFO{1}{\Unary'}$ (where
$\Unary'$ is obtained from $\Unary$ by adding extra predicates).\\

The first step for our reduction consists in providing a
characterisation for the elements located in the radius-1-ball and the radius-2-ball around
another element.

\begin{lemma}\label{lem:shape-balls}
	Let $\AA=(A,(P_{\unary}),\f{1},\f{2}) \in
    \nData{2}{\Unary}$  and $a,b\in A$ and $j \in \{1,2\}$.  We have:
\begin{enumerate}
\item  $(b,j)\in \Ball{1}{a}{\AA}$ iff there is $i\in \{1,2\}$ such that $\relsaa{i}{j}{\AA}{a}{b}$.
\item $(b,j)\in \Ball{2}{a}{\AA}$ iff there exists $i,k\in \{1,2\}$ such that $\relsaa{i}{k}{\AA}{a}{b}$.
\end{enumerate}
\end{lemma}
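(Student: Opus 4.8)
The plan is to unfold the definitions of distance and of the edge relation of the data graph $\gaifmanish{\AA}$, and to reduce everything to transitivity of equality among data values. For part~(1), I would argue directly. By definition, $(b,j)\in\Ball{1}{a}{\AA}$ iff $\distaa{(a,i)}{(b,j)}{\AA}\leq 1$ for some $i\in\{1,2\}$, that is, iff $(a,i)=(b,j)$ or $((a,i),(b,j))\in\Edge{\gaifmanish{\AA}}$. Unfolding the edge relation, this amounts to: $a=b$ or $\relsaa{i}{j}{\AA}{a}{b}$ for some $i$. The case $a=b$ is already covered by the right-hand side, since then $\funct{j}(a)=\funct{j}(b)$ gives $\relsaa{j}{j}{\AA}{a}{b}$. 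Hence the condition is exactly $\exists i.\ \relsaa{i}{j}{\AA}{a}{b}$, as claimed.

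For part~(2), the direction from right to left is immediate. If $\relsaa{i}{k}{\AA}{a}{b}$, i.e.\ $\funct{i}(a)=\funct{k}(b)$, then $((a,i),(b,k))\in\Edge{\gaifmanish{\AA}}$, and since $b=b$ we also have $((b,k),(b,j))\in\Edge{\gaifmanish{\AA}}$. Concatenating these two edges gives a path of length at most $2$ from $(a,i)$ to $(b,j)$, so $(b,j)\in\Ball{2}{a}{\AA}$ for every $j$. I would highlight here that this is precisely why the condition in part~(2) does not depend on $j$: the two copies $(b,1)$ and $(b,2)$ of the same element are always mutually adjacent, so once $b$ shares any value with $a$, both copies land in the radius-$2$-ball.

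The only direction requiring work is left to right. Given $(b,j)\in\Ball{2}{a}{\AA}$, fix $i$ with $\distaa{(a,i)}{(b,j)}{\AA}\leq 2$. If this distance is at most $1$, then $(b,j)\in\Ball{1}{a}{\AA}$ and part~(1) already yields $\relsaa{i'}{j}{\AA}{a}{b}$ for some $i'$, which is the desired shared value (with $k=j$). Otherwise there is an intermediate vertex $(c,l)$ with $((a,i),(c,l)),((c,l),(b,j))\in\Edge{\gaifmanish{\AA}}$. The core of the argument is a short case analysis over the two types each edge can have: the first edge gives $a=c$ or $\funct{i}(a)=\funct{l}(c)$, and the second gives $c=b$ or $\funct{l}(c)=\funct{j}(b)$. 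In each of the four combinations, transitivity of equality produces a value of $a$ equal to a value of $b$ --- for instance, $\funct{i}(a)=\funct{l}(c)$ together with $\funct{l}(c)=\funct{j}(b)$ gives $\funct{i}(a)=\funct{j}(b)$, while $\funct{i}(a)=\funct{l}(c)$ together with $c=b$ gives $\funct{i}(a)=\funct{l}(b)$, and the cases with $a=c$ are handled symmetrically. This establishes $\relsaa{i'}{k}{\AA}{a}{b}$ for suitable indices $i',k$. I expect this bookkeeping to be the main (though routine) obstacle; the conceptual point is that, with only two data values, a detour through any intermediate node forces the endpoints to already share a value, which is exactly the collapse that the subsequent reduction to $1$-data structures exploits.
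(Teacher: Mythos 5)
Your proof is correct and follows essentially the same route as the paper's: both directions of (2) are handled via the path $(a,i),(b,k),(b,j)$ for right-to-left and a case analysis on the intermediate vertex of a length-$2$ path for left-to-right, with transitivity of equality closing each case. The only cosmetic difference is that you organize the final case analysis by edge type (identity edge vs.\ data-equality edge) whereas the paper splits on whether the intermediate node $c$ coincides with $a$ or $b$; these are the same four cases, and your version even makes explicit the easy converse of part~(1) that the paper leaves implicit.
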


\begin{proof} We show both statements:
\begin{enumerate}
\item Since $(b,j)\in \Ball{1}{a}{\AA}$, by definition we have either $b=a$ and in that case $\relsaa{j}{j}{\AA}{a}{b}$ holds, or $b \neq a$ and necessarily there exists $i\in \{1,2\}$ such that $\relsaa{i}{j}{\AA}{a}{b}$.
 \item First, if there exists $i,k\in \{1,2\}$ such that
$\relsaa{i}{k}{\AA}{a}{b}$, then $(b,k)\in \Ball{1}{a}{\AA}$ and $(b,j)\in \Ball{2}{a}{\AA}$ by definition. Assume now that $(b,j)\in \Ball{2}{a}{\AA}$. Hence  there exists $i\in \{1,2\}$ such that $\distaa{(a,i)}{(b,j)}{\AA}\leq 2$.
We perform a case analysis on the value of
$\distaa{(a,i)}{(b,j)}{\AA}$.
\begin{itemize}
  \item \textbf{Case $\distaa{(a,i)}{(b,j)}{\AA}=0$}. In that case
    $a=b$ and $i=j$ and we have  $\relsaa{i}{i}{\AA}{a}{b}$.
   \item \textbf{Case $\distaa{(a,i)}{(b,j)}{\AA}=1$}. In that case,
     $((a,i),(b,j))$ is an edge in the data graph $\gaifmanish{\AA}$
     of $\AA$ which means that $\relsaa{i}{j}{\AA}{a}{b}$ holds.
   \item \textbf{Case $\distaa{(a,i)}{(b,j)}{\AA}=2$}. Note that
     we have by definition $a \neq b$. Furthermore, in that case, there is
     $(c,k)\in A\times\{1,2\}$ such that $((a,i),(c,k))$ and
     $((c,k),(b,j))$ are edges in $\gaifmanish{\AA}$. If $c\neq a$ and
     $c\neq b$, this implies that $\relsaa{i}{k}{\AA}{a}{c}$ and
     $\relsaa{k}{j}{\AA}{c}{b}$, so $\relsaa{i}{j}{\AA}{a}{b}$ and
     $\distaa{(a,i)}{(b,j)}{\AA}=1$ which is a contradiction.
	If $c=a$ and $c\neq b$, this implies that $\relsaa{k}{j}{\AA}{a}{b}$.
	If $c\neq a$ and $c = b$, this implies that $\relsaa{i}{k}{\AA}{a}{b}$.
\end{itemize}
\end{enumerate}
\end{proof}

We consider a formula $\phi=\exists x_1\ldots\exists
x_n.\phi_{qf}(x_1,\ldots,x_n)$ of $\eFO{2}{\Unary}{2}$ in prenex normal form, i.e., such that $\phi_{qf}(x_1,\ldots,x_n)\in\qfFO{2}{\Unary}{2}$. We know that there is a structure $\AA=(A,(P_{\unary})_{\unary \in \Unary},\linebreak[0]\f{1},\f{2})$ in $\nData{2}{\Unary}$ such that $\AA\models\phi$ if and only if there are $a_1,\ldots,a_n \in A $ such that $\AA\models\phi_{qf}(a_1,\ldots,a_n)$.

% As a notational shorthand, we will write $\tuple{x}$, $\exists\tuple{x}$ and $\tuple{a}$ for $x_1,\ldots, x_n$, $\exists x_1.\cdots\exists x_n$ and $a_1,\ldots,a_n$ respectively.
Let $\AA=(A,(P_{\unary})_{\unary \in \Unary},\f{1},\f{2})$ be a structure  in $\nData{2}{\Unary}$ and a tuple $\tuple{a} = (a_1,\ldots,a_n)$ of elements in $A^n$. We shall present the construction of a $1$-data structure
$\AAas$ in $\nData{1}{\Unaryp}$ (with $\Unary \subseteq \Unaryp$) with the same set of nodes as $\AA$, but where each node carries a single data value. In order to retrieve the data relations that hold in $\AA$ while reasoning over $\AAas$, we introduce extra-predicates in $\Unaryp$ to establish whether a node shares a common value with one of the nodes among $a_1,\ldots,a_n$ in $\AA$.

\begin{figure*}[htbp]
\centering
	\begin{subfigure}[b]{0.45\textwidth}
\begin{tikzpicture}[node distance=2cm]
	\node [data, label=below left:$a$]                (A)    {1
      \nodepart{two} 2 };
	\node [data, above left of=A,xshift=-1em,label=below:$b$]    (B)    {1 \nodepart{second} 3};
	\node [data, above right of=A,xshift=1em,label=below right:$c$]    (C)    {3 \nodepart{second} 2};
	\node [dataredred, below left of=A,label=below:$d$]    (D)    {5 \nodepart{second} 6};
	\node [dataredred, above right of=B,xshift=1em, label=below:$e$]
    (E)    {4 \nodepart{second} 3};
    \node [data, below right of=A, label=below:$f$]    (F)    {2 \nodepart{second} 7};
	%\node [dataredred, below right of=C]    (F)    {6 \nodepart{second} 6};

	\draw[line width=0.7pt,<->] (A.one north) .. controls +(0,.5) and
    +(.5,0).. (B.one east);
	%\draw[line width=0.7pt,<->] (D.one north).. controls +(0,1) and
    %+(-.5,0).. (B.south west);
	\draw[line width=0.7pt,<->] (B.two east) .. controls +(2,-0.5) and
    +(-2,.5).. (C.one west);
    \draw[line width=0.7pt,<->] (E.two east) .. controls +(0,0) and
    +(0,0.5).. (C.one north);
    \draw[line width=0.7pt,<->] (E.south west) .. controls +(0,0) and
    +(0.5,.2).. (B.two east);
	%\draw[line width=0.7pt] (D.one east) edge[<->] (A.one west);
	%\draw[line width=0.7pt,<->] (B.one north) .. controls +(0,.5) and +(-.5,0).. (E.one west);
	%\draw[line width=0.7pt,<->,in=45] (C.one north) .. controls +(0,.5) and +(1,.5).. (E.north east);
	\draw[line width=0.7pt,<->] (A.south east) .. controls +(1,-.5)
    and +(0,0).. (C.south west);
    %\draw[line width=0.7pt,<->] (D.south west) .. controls +(1,-.5) and +(0,-.5).. (E.north west);
    \draw[line width=0.7pt,<->] (A.south) .. controls +(0,-0.5) and
    +(0,0).. (F.one west);
    \draw[line width=0.7pt,<->] (F.north) .. controls +(0,0) and
    +(0,0).. (C.south);
	\selfconnectionright{A};
	\selfconnectionleft{B};
	\selfconnectionright{C};
	\selfconnectionleft{D};
	\selfconnectionleft{E};
	\selfconnectionright{F};

\end{tikzpicture}
\caption{A data structure $\AA$ and  $\gaifmanish{\AA}$.}
\label{fig:abstract-a}
	\end{subfigure}
	\unskip\ \vrule\ \hspace{1em}
	\begin{subfigure}[b]{0.45\textwidth}
\begin{tikzpicture}[node distance=2cm]
	\node [dataone, label=below left:$a$]                (A)    {8};
	\node [dataone, above left of=A,xshift=-1em,label=below:$b$]    (B)    {3};
	\node [dataone, above right of=A,xshift=1em,label=below:$c$]    (C)    {3};
	\node [dataone, below left of=A,label=below:$d$]    (D)    {9};
	\node [dataone, above right of=B,xshift=1em, label=below:$e$]
    (E)    {10};
    \node [dataone, below right of=A, label=below:$f$]    (F)    {7};
    \node [right of=C,yshift=-3em]    (G)    {$\begin{array}{l}\uP{a[1,1]}=\{a,b\} \\
                                     \uP{a[2,2]}=\{a,c\}\\
                                     \uP{a[1,2]}=\emptyset \\
                                   \uP{a[2,1]}=\{f\}\\\end{array}$};

\end{tikzpicture}
\caption{$\sem{\AA}_{(a)}$.}
\label{fig:abstract-b}
	\end{subfigure}
	\caption{
\label{fig:abstract}}
\end{figure*}

We now explain formally how we build $\AAas$. Let $\Udeci{n}=\{\udd{p}{i}{j}\mid p\in\{1,\ldots,n\}, i,j\in\{1,2\}\}$ be a set of new unary predicates and $\Unaryp = \Unary \cup \Udeci{n}$.
For every element $b\in A$, the predicates in $\Udeci{n}$ are used to keep track of the relation between the data values of $b$ and the one of $a_1,\ldots,a_n$ in $\AA$.
Formally, we define $\uP{\udd{p}{i}{j}}=\{b\in A\mid \AA\models \rels{i}{j}{a_p}{b}\}$.
We now define a data function $f:A\to \N$.
We recall for this matter that $\Valuessub{\AA}{\tuple{a}} = \{f_1(a_1),f_2(a_1),\ldots,f_1(a_n),f_2(a_n)\}$ and let $\inj:A\to\N\setminus \Values{\AA}$ be an injection. For every $b \in A$, we set:
\[
	f(b) = \begin{cases}
		f_2(b) \text{ if } f_1(b)\in \Valuessub{\AA}{\tuple{a}} \text{ and } f_2(b)\notin \Valuessub{\AA}{\tuple{a}}\\
		f_1(b) \text{ if } f_1(b)\notin \Valuessub{\AA}{\tuple{a}} \text{ and } f_2(b)\in \Valuessub{\AA}{\tuple{a}}\\
		\inj(b) \text{ otherwise}
    	   \end{cases}
\]
Hence depending if $f_1(b)$ or $f_2(b)$ is in $\Valuessub{\AA}{\tuple{a}}$, it splits the elements of $\AA$ in four categories.
If $f_1(b)$ and $f_2(b)$ are in $\Valuessub{\AA}{\tuple{a}}$, the predicates in $\Udeci{n}$ allow us to retrieve all the data values of $b$. 
Given $j\in\{1,2\}$, if $f_j(b)$ is in $\Valuessub{\AA}{\tuple{a}}$ but $f_{3-j}(b)$ is not, the new predicates will give us the $j$-th data value of $b$ and we have to keep track of the $(3-j)$-th one, so we save it in $f(b)$.
Lastly, if neither $f_1(b)$ nor $f_2(b)$ is in $\Valuessub{\AA}{\tuple{a}}$, we will never be able to see the data values of $b$ in $\phi_{q_f}$ (thanks to Lemma \ref{lem:shape-balls}), so they do not matter to us. Finally, we have  $\AAas = (A, (\uP{\unary})_{\unary\in\Unaryp}, f) $. Figure \ref{fig:abstract-b} provides an example of  $\Valuessub{\AA}{\tuple{a}}$ for the data structures depicted on Figure \ref{fig:abstract-a} and $\tuple{a}=(a)$.

The next lemma formalizes the connection existing between $\AA$ and
$\AAas$ with $\tuple{a} = (a_1,\ldots,a_n)$.

\begin{lemma}\label{lem:r2dv2-semantique}
	Let $b,c\in A$ and $j,k\in\{1,2\}$ and $p\in\{1,\ldots,n\}$. The following statements then hold.
	\begin{enumerate}
    \item If $(b,j)\in\Ball{1}{a_p}{\AA}$ and $(c,k)\in\Ball{1}{a_p}{\AA}$ then $\vprojr{\AA}{a_p}{2}\models\rels{j}{k}{b}{c}$ iff there is $i\in\{1,2\}$ s.t. $b \in \uP{\udd{p}{i}{j}}$ and $c \in \uP{\udd{p}{i}{k}}$.
		\item If $(b,j)\in\Ball{2}{a_p}{\AA}\setminus\Ball{1}{a_p}{\AA}$ and $(c,k)\in\Ball{1}{a_p}{\AA}$ then $\vprojr{\AA}{a_p}{2}\nvDash\rels{j}{k}{b}{c}$ 
		\item If $(b,j),(c,k) \in\Ball{2}{a_p}{\AA}\setminus\Ball{1}{a_p}{\AA}$ then $\vprojr{\AA}{a_p}{2}\models\rels{j}{k}{b}{c}$ iff either                $\relsaa{1}{1}{\AAas}{b}{c}$ or there exists $p' \in \{1,\ldots,n\}$ and $\ell \in \{1,2\}$ such that $b \in \uP{\udd{p'}{\ell}{j}}$ and $c \in \uP{\udd{p'}{\ell}{k}}$ .
		\item If $(b,j)\notin\Ball{2}{a_p}{\AA}$ and $(c,k)\in\Ball{2}{a_p}{\AA}$ then $\vprojr{\AA}{a_p}{2}\nvDash\rels{j}{k}{b}{c}$
		\item If $(b,j)\notin\Ball{2}{a_p}{\AA}$ and $(c,k)\notin\Ball{2}{a_p}{\AA}$ then $\vprojr{\AA}{a_p}{2}\models\rels{j}{k}{b}{c}$ iff $b=c$ and $j=k$.
	\end{enumerate}
\end{lemma}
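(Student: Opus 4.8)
The plan is to route every item through one uniform observation about the $2$-view. For each $b\in A$ and $j\in\{1,2\}$, the value carried by $(b,j)$ in $\vprojr{\AA}{a_p}{2}$ equals $f_j(b)$ when $(b,j)\in\Ball{2}{a_p}{\AA}$, and otherwise is a fresh value lying outside $\Values{\AA}$, with distinct pairs receiving distinct fresh values by injectivity of the replacement map. Hence $\vprojr{\AA}{a_p}{2}\models\rels{j}{k}{b}{c}$ is decided by a three-way split: if both $(b,j),(c,k)\in\Ball{2}{a_p}{\AA}$ it reduces to $f_j(b)=f_k(c)$; if exactly one lies in the ball, a genuine value meets a fresh value and the relation fails; and if neither lies in the ball, two fresh values meet, coinciding (by injectivity) exactly when $(b,j)=(c,k)$. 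This already gives items 4 and 5 directly, so the remaining work is to analyse $f_j(b)=f_k(c)$ for two preserved values, which is what items 1--3 require.

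For item 1, both values are preserved, so I would turn the left-hand side into $f_j(b)=f_k(c)$ and apply Lemma \ref{lem:shape-balls}(1): membership $(b,j)\in\Ball{1}{a_p}{\AA}$ supplies some $i$ with $f_i(a_p)=f_j(b)$, i.e.\ $b\in\uP{\udd{p}{i}{j}}$, and unfolding the definition of the predicates $\uP{\udd{p}{i}{j}}$ (which record precisely the equalities $f_i(a_p)=f_j(b)$) yields the stated equivalence in both directions. For item 2, the compared values are again preserved and I would argue by contradiction: if $f_j(b)=f_k(c)$, then $(c,k)\in\Ball{1}{a_p}{\AA}$ gives some $i$ with $f_i(a_p)=f_k(c)=f_j(b)$, and Lemma \ref{lem:shape-balls}(1) would force $(b,j)\in\Ball{1}{a_p}{\AA}$, contradicting the hypothesis.

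Item 3 is the delicate case and where the construction of $\AAas$ genuinely enters; I expect the main obstacle here. The structural fact I would read off from Lemma \ref{lem:shape-balls} is that $(b,j)\in\Ball{2}{a_p}{\AA}\setminus\Ball{1}{a_p}{\AA}$ forces the companion value $f_{3-j}(b)$ into $\Valuessub{\AA}{\tuple{a}}$ while $f_j(b)$ is not shared with $a_p$ (and symmetrically for $c$). For the forward direction I would split on whether the common value $f_j(b)=f_k(c)$ itself belongs to $\Valuessub{\AA}{\tuple{a}}$: if it equals some $f_\ell(a_{p'})$ then $b\in\uP{\udd{p'}{\ell}{j}}$ and $c\in\uP{\udd{p'}{\ell}{k}}$, giving the second disjunct; if it does not, then exactly one coordinate value of each of $b,c$ lies in $\Valuessub{\AA}{\tuple{a}}$, so the definition of $f$ gives $f(b)=f_j(b)=f_k(c)=f(c)$, i.e.\ $\relsaa{1}{1}{\AAas}{b}{c}$. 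The converse directions are routine except for the disjunct $\relsaa{1}{1}{\AAas}{b}{c}$, which is the crux: here $f(b)$ may be a fresh value $\inj(b)$ when both coordinate values of $b$ lie in $\Valuessub{\AA}{\tuple{a}}$, and $f(b)=f(c)$ must be reconciled with the goal $f_j(b)=f_k(c)$. In that situation injectivity of $\inj$ and freshness force $b=c$, and since $(b,j),(b,k)\in\Ball{2}{a_p}{\AA}\setminus\Ball{1}{a_p}{\AA}$ cannot hold simultaneously for $j\neq k$ (one of $f_1(b),f_2(b)$ must be shared with $a_p$ for $b$ to be in the radius-$2$ ball at all), we get $j=k$, whence $f_j(b)=f_k(c)$ trivially. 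Getting this fresh-value collision right is the hard part of the lemma.
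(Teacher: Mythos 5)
Your proof is correct and follows essentially the same route as the paper's: reduce everything to preservation versus freshness of values in the $2$-view (giving items 4--5 and the reduction to $f_j(b)=f_k(c)$), use Lemma~\ref{lem:shape-balls} for items 1--2, and split item 3 on whether the shared value lies in $\Valuessub{\AA}{\tuple{a}}$. Your explicit handling of the fresh-value collision in the converse of item 3 (injectivity forcing $b=c$ and then ruling out $j\neq k$) is, if anything, slightly more careful than the paper's own wording of that step.
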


\begin{proof}
	We suppose that $\vprojr{\AA}{a_p}{2} = (A,(\uP{\unary})_\unary,\fp_1,\fp_2)$.
	\begin{enumerate}
		\item Assume that $(b,j)\in\Ball{1}{a_p}{\AA}$ and $(c,k)\in\Ball{1}{a_p}{\AA}$.
			It implies that $\fp_j(b)=f_j(b)$ and $\fp_k(c)=f_k(c)$.
			Then assume that $\vprojr{\AA}{a_p}{2}\models\rels{j}{k}{b}{c}$. 
			As $(b,j)\in\Ball{1}{a_p}{\AA}$, thanks to Lemma \ref{lem:shape-balls}.1 it means that there is a $i\in\{1,2\}$ such that $\relsaa{i}{j}{\AA}{a_p}{b}$.
			So we have $f_k(c)=\fp_k(c)=\fp_j(b)=f_j(b)=f_i(a_p)$, that is $\relsaa{i}{k}{\AA}{a_p}{c}$. Hence by definition, $b \in \uP{\udd{p}{i}{j}}$ and $c \in \uP{\udd{p}{i}{k}}$.
			Conversely, let $i\in\{1,2\}$ such that $b \in \uP{\udd{p}{i}{j}}$ and $c \in \uP{\udd{p}{i}{k}}$. This means that $\relsaa{i}{j}{\AA}{a_p}{b}$ and $\relsaa{i}{k}{\AA}{a_p}{c}$.
			So $\fp_j(b)=f_j(b)=f_i(a_p)=f_k(c)=\fp_k(c)$, that is $\vprojr{\AA}{a_p}{2}\models\rels{j}{k}{b}{c}$. 
		\item Assume that $(b,j)\in\Ball{2}{a_p}{\AA}\setminus\Ball{1}{a_p}{\AA}$ and $(c,k)\in\Ball{1}{a_p}{\AA}$.
			It implies that $\fp_j(b)=f_j(b)$ and $\fp_k(c)=f_k(c)$.
			Thanks to Lemma \ref{lem:shape-balls}.1, $(c,k)\in\Ball{1}{a_p}{\AA}$ implies that $f_k(c)\in\{f_1(a_p),f_2(a_p)\}$ and $(b,j)\notin\Ball{1}{a_p}{\AA}$ implies that $f_j(b)\notin\{f_1(a_p),f_2(a_p)\}$.
			So  $\vprojr{\AA}{a_p}{2}\not \models\rels{j}{k}{b}{c}$.
		\item Assume that $(b,j), (c,k) \in\Ball{2}{a_p}{\AA}\setminus\Ball{1}{a_p}{\AA}$. As previously, we have that $f_j(b)\notin\{f_1(a_p),f_2(a_p)\}$ and $f_k(c)\notin\{f_1(a_p),f_2(a_p)\}$, and thanks to Lemma \ref{lem:shape-balls}.2, we have $f_{3-j}(b) \in \{f_1(a_p),f_2(a_p)\}$ and $f_{3-k}(b) \in \{f_1(a_p),f_2(a_p)\}$. There is then two cases:
    \begin{itemize}
    \item Suppose there does not exists $p' \in \{1,\ldots,n\}$ such that $f_{j}(b) \in \{f_1(a_{p'}),f_2(a_{p'})\}$ .This allows us to deduce that $\fp_j(b)=f_j(b)=f(b)$ and $\fp_k(c)=f_k(c)$. If $\vprojr{\AA}{a_p}{2}\models\rels{j}{k}{b}{c}$, then necessarily there does not exists $p' \in \{1,\ldots,n\}$ such that $f_{k}(c) \in \{f_1(a_{p'}),f_2(a_{p'})\}$ so we have $\fp_k(c)=f_k(c)=f(c)$ and  $f(b)=f(c)$, consequently $\relsaa{1}{1}{\AAas}{b}{c}$. Similarly assume that  $\relsaa{1}{1}{\AAas}{b}{c}$, this means that $f(b)=f(c)$ and either $b=c$ and $k=j$ or $b \neq c$ and by injectivity of $f$,we have $f_j(b)=f(b)=f_k(c)$. This allows us to deduce that $\vprojr{\AA}{a_p}{2}\models\rels{j}{k}{b}{c}$.
    \item  If there exists $p' \in \{1,\ldots,n\}$ such that $f_{j}(b) = f_\ell(a_{p'})$ for some $\ell \in \{1,2\}$. Then we have $b \in       \uP{\udd{p'}{\ell}{j}}$. Consequently, we have $\vprojr{\AA}{a_p}{2}\models\rels{j}{k}{b}{c}$ iff $c \in \uP{\udd{p'}{\ell}{k}}$.
   \end{itemize}
		\item We prove the case 4 and 5 at the same time.
			Assume that $(b,j)\notin\Ball{2}{a_p}{\AA}$.
			It means that in order to have $\fp_j(b)=\fp_k(c)$, we must have $(b,j)=(c,k)$.
			So if $(c,k)\in\Ball{2}{a_p}{\AA}$, we can not have $\vprojr{\AA}{a_p}{2}\models\rels{j}{k}{b}{c}$ which ends case 4.
			And if $(c,k)\notin\Ball{2}{a_p}{\AA}$, we have that $\vprojr{\AA}{a_p}{2}\models\rels{j}{k}{b}{c}$ iff $b=c$ and $j=k$.
	\end{enumerate}
\end{proof}

We shall now see how we translate the formula
$\phi_{qf}(x_1,\ldots,x_n)$ into a formula
$\phit{\phi_{qf}}(x_1,\ldots,x_n)$ in $\ndFO{1}{\Unaryp}$ such that $\AA$ satisfies $\phi_{qf}(a_1,\ldots,a_n)$ if, and only if, $\AAas$ satisfies $\phit{\phi_{qf}}(a_1,\ldots,a_n)$. Thanks to the previous lemma we know that if $\vprojr{\AA}{a_p}{2}\models\rels{j}{k}{b}{c}$ then $(b,j)$ and $(c,k)$ must belong to the  same set among $\Ball{1}{a_p}{\AA}$, $\Ball{2}{a_p}{\AA}\setminus\Ball{1}{a_p}{\AA}$ and $\comp{\Ball{2}{a_p}{\AA}}$ and we can test in $\AAas$ whether $(b,j)$ is a member of  $\Ball{1}{a_p}{\AA}$ or $\Ball{2}{a_p}{\AA}$.
Indeed, thanks to Lemmas \ref{lem:shape-balls}.1 and \ref{lem:shape-balls}.2, we have $(b,j) \in \Ball{1}{a_p}{\AA}$ iff $b\in\bigcup_{i=1,2}\uP{\udd{p}{i}{j}}$ and $(b,j) \in \Ball{2}{a_p}{\AA}$ iff $b\in\bigcup_{i=1,2}^{j'=1,2} \uP{\udd{p}{i}{j'}}$. This reasoning leads to the  following formulas in $\ndFO{1}{\Unaryp}$ with $p \in \{1,\ldots,n\}$ and $j \in \{1,2\}$:
\begin{itemize}
\item $\phiBun{j}(y) := \udd{p}{1}{j}(y) \ou \udd{p}{2}{j}(y)$ to test if the $j$-th field of an element belongs to $\Ball{1}{a_p}{\AA}$
\item $\phiBdeux(y) := \phiBun{1}(y) \ou \phiBun{2}(y)$ to test if a field of an element belongs to $\Ball{2}{a_p}{\AA}$
\item $\phiBdsu{j}(y) := \phiBdeux(y) \et \neg\phiBun{j}(y)$ to test that the $j$-th field of an element belongs to $\Ball{2}{a_p}{\AA}\setminus\Ball{1}{a_p}{\AA}$
\end{itemize}
% the first one allows to test whether an the $j$-
% 	Their translations in FO are the formulas $\phiBun{j}(y)$ and $\phiBdeux(y)$, furthermore we add a formula $\phiBdsu{j}(y)$ which reflects if $(b,j)\in\Ball{2}{a_p}{\AA}\setminus\Ball{1}{a_p}{\AA}$:
  
% 	\begin{align*}
% 		\phiBun{j}(y) &= \udd{p}{1}{i}(y) \ou \udd{p}{2}{j}(y)  \\
% 		\phiBdeux(y) &= \phiBun{1}(y) \ou \phiBun{2}(y)\\
% 		\phiBdsu{j}(y) &= \phiBdeux(y) \et \neg\phiBun{\other{j}}(y)
                         %     \end{align*}
We shall now present how we use these formulas to translate atomic formulas of the form  $\rels{j}{k}{y}{z}$ under some $\locformr{-}{x_p}{2}$. For this matter, we rely on the three following formulas of $\ndFO{1}{\Unaryp}$:
\begin{itemize}
\item The first formula asks  for $(y,j)$ and $(z,k)$ to be in $\Ball{1}{a_p}{1}$ (where here we abuse notations, using variables for the elements they represent) and for these two data values to coincide with one data value of $a_p$, it corresponds to Lemma \ref{lem:r2dv2-semantique}.1:
  $$
  \phiun(y,z) := \phiBun{j}(y) \et \phiBun{k}(z) \et \Ou_ {i=1,2}\udd{p}{i}{j}(y)\et\udd{p}{i}{k}(z) 
  $$
\item The second formula asks for $(y,j)$ and $(z,k)$ to be in $\Ball{2}{a_p}{\AA}\setminus\Ball{1}{a_p}{\AA}$ and checks either whether the data values of $y$ and $z$ in $\AAas$ are equal or whether there exist $p'$ and $\ell$ such that $y$ belongs to $\udd{p'}{\ell}{j}(y)$ and $z$ belongs to $\udd{p'}{\ell}{k}(z)$, it corresponds to Lemma \ref{lem:r2dv2-semantique}.3:
  $$
  \phideux(y,z) := \phiBdsu{j}(y) \et \phiBdsu{k}(z) \et \big (y\sim z \ou\big(\Ou^n_{p'=1}\Ou^2_ {\ell=1}\udd{p'}{\ell}{j}(y)\et\udd{p'}{\ell}{k}(z)\big)\big)  
  $$
\item The third formula asks for $(y,j)$ and $(z,k)$ to not belong to $\Ball{2}{a_p}{\AA}$ and for $y=z$, it corresponds to Lemma \ref{lem:r2dv2-semantique}.5:
  $$
  \phitrois(y,z) := \begin{cases}
		                                	\neg \phiBdeux(y) \et \neg\phiBdeux(z) \et y=z  &\text{ if } j=k \\
																			\bot  &\text{ otherwise}
		                                \end{cases}
  $$
\end{itemize}

	% The transformation of $\phi$ is as follow. When we see $\rels{j}{k}{b}{c}$ under some $\locformr{-}{x_p}{2}$ we transform it as a disjunction of two or three (depending on whether $j=k$ or not) formulas.
	% The first one ask for $(b,j)$ and $(c,k)$ to be in $\Ball{1}{a_p}{1}$ and the two data values to coincide with one data value of $a_p$, we call this formula $\phiun(y,z)$.
	% The second formula ask for $(b,j)$ and $(c,k)$ to be in $\Ball{2}{a_p}{\AA}\setminus\Ball{1}{a_p}{\AA}$ and the data value of $b$ and $c$ in $\AAas$ to be equal, we call this formula $\phideux(y,z)$.
	% The third one asks for $(b,j)$ and $(c,k)$ to not be in $\Ball{2}{a_p}{\AA}$ and for $b=c$, we call this formula $\phitrois(y,z)$. 
	% \begin{align*}
	% 	\phiun(y,z) &= \phiBun{j}(y) \et \phiBun{k}(z) \et \Ou_ {i=1,2}\udd{p}{i}{j}(y)\et\udd{p}{i}{k}(z) \\
	% 	\phideux(y,z) &= \phiBdsu{j}(y) \et \phiBdsu{k}(z) \et y\sim z \\
	% 	\phitrois(y,z) &= \begin{cases}
	% 	                                	\neg \phiBdeux(y) \et \neg\phiBdeux(z) \et y=z  &\text{ if } j=k \\
	% 																		\bot  &\text{ otherwise}
	% 	                                \end{cases}
	% \end{align*}
	% Then for any formula $\locformr{\psi}{x_p}{2}$ we simply transform it into $\psi$. This give us a formula $\phit(\tuple{x})\in\ndFO{1}{\Unaryp}$.
	Finally, here is the inductive definition of the translation $\T{-}$ which uses sub transformations $\Tp{-}$ in order to remember the centre of the ball and leads to the construction of $\phit{\phi_{qf}}(x_1,\ldots,x_n)$:
\[	\begin{array}{rcl}
		\T{\phi\ou\phi'} &=& \T{\phi} \ou \T{\phi'}\\
		\T{x_p=x_p'}  &=& x_p=x_p'            \\      
		\T{\neg\phi} &=& \neg\T{\phi}       \\   
      \T{\locformr{\psi}{x_p}{2}}  &=& \Tp{\psi}  \\
      \Tp{\rels{j}{k}{y}{z}} &=&\phiun(y,z) \ou \phideux(y,z) \ou \phitrois(y,z)\\
      \Tp{\unary(x)} &=& \unary(x)  \\
      \Tp{x=y} &=& x=y \\
      \Tp{\phi\ou\phi'}&=& \Tp{\phi} \ou \Tp{\phi'} \\
      \Tp{\neg\phi} &=& \neg\Tp{\phi}\\
      \Tp{\exists x. \phi} &=& \exists x.\Tp{\phi}\\
	\end{array}\]

\begin{lemma} \label{lem:correct}
	We have $\AA\models\phi_{qf}(\tuple{a})$ iff $\AAas\models\phit{\phi_{qf}}(\tuple{a})$.
\end{lemma}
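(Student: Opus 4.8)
The plan is to prove the equivalence by a two-level structural induction that mirrors the two-level definition of the translation. The outer induction is on the structure of $\phi_{qf} \in \qfFO{2}{\Unary}{2}$ and follows the clauses defining $\T{-}$; the inner induction, invoked only for the local-modality clause, is on the structure of the formula $\psi$ occurring under a modality $\locformr{\psi}{x_p}{2}$ and follows the clauses defining $\Tp{-}$. Throughout I would use the key structural fact that $\AA$, every view $\vprojr{\AA}{a_p}{2}$, and $\AAas$ share the same universe $A$ and agree on all unary predicates from $\Unary$; consequently equality atoms and predicate atoms are preserved verbatim and quantifiers range over the same set in all three structures.

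For the outer induction, the Boolean and equality clauses ($\T{\phi\ou\phi'}$, $\T{\neg\phi}$, $\T{x_p = x_{p'}}$) are immediate, since $\T{-}$ is a homomorphism for these connectives and equality of elements does not depend on data values. The only substantive clause is $\T{\locformr{\psi}{x_p}{2}} = \Tp{\psi}$. Unfolding the semantics of the local modality, $\AA \models_I \locformr{\psi}{x_p}{2}$ holds iff $\vprojr{\AA}{I(x_p)}{2} \models_I \psi$; with $I(x_p) = a_p$ this reduces the claim to showing, for every interpretation $I$ with $I(x_p) = a_p$,
\[
\vprojr{\AA}{a_p}{2} \models_I \psi \quad\Longleftrightarrow\quad \AAas \models_I \Tp{\psi}.
\]
This is precisely the statement I would establish by the inner induction on $\psi$, stated uniformly for all such $I$ so that the induction passes cleanly through quantifiers.

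In the inner induction, the atoms $\unary(x)$ and $x = y$ and the connectives $\ou$, $\neg$ are handled exactly as above, and the quantifier clause $\Tp{\exists x.\phi} = \exists x.\Tp{\phi}$ goes through because both structures have universe $A$; here one only needs the harmless convention that no quantifier inside $\psi$ rebinds the centre variable $x_p$, so that $I[x/a](x_p) = a_p$ and the induction hypothesis still applies. The crucial clause is the data atom $\rels{j}{k}{y}{z}$, translated to $\phiun \ou \phideux \ou \phitrois$. Writing $b = I(y)$ and $c = I(z)$, I would first record that, by Lemma~\ref{lem:shape-balls}, the subformulas $\phiBun{j}$ and $\phiBdeux$ evaluated in $\AAas$ decide exactly whether $(b,j)$ lies in $\Ball{1}{a_p}{\AA}$, in $\Ball{2}{a_p}{\AA} \setminus \Ball{1}{a_p}{\AA}$, or outside $\Ball{2}{a_p}{\AA}$. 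The argument then splits on these three zones. When $(b,j)$ and $(c,k)$ lie in the same zone, exactly one of $\phiun$, $\phideux$, $\phitrois$ has its zone guards satisfied, and the matching clause of Lemma~\ref{lem:r2dv2-semantique} (parts~1, 3, and~5 respectively) shows that this disjunct evaluates in $\AAas$ exactly as $\rels{j}{k}{b}{c}$ evaluates in $\vprojr{\AA}{a_p}{2}$. When $(b,j)$ and $(c,k)$ lie in different zones, the left-hand relation is false by parts~2 and~4 of Lemma~\ref{lem:r2dv2-semantique} (using the symmetry $\rels{j}{k}{b}{c} \Leftrightarrow \rels{k}{j}{c}{b}$ to reduce the remaining cross cases to those two statements), while every disjunct on the right is false because its zone guards force $(b,j)$ and $(c,k)$ into a common zone; hence both sides are false.

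The main obstacle is this data-atom case: it is where all the preparatory work is spent, and the delicate point is the exhaustive zone bookkeeping — verifying that the guards $\phiBun{j}$, $\phiBdsu{j}$, and $\neg\phiBdeux$ embedded in $\phiun$, $\phideux$, and $\phitrois$ exactly partition the situations covered by the five parts of Lemma~\ref{lem:r2dv2-semantique}, and in particular that a cross-zone pair $(b,j),(c,k)$ simultaneously falsifies the left-hand relation and all three disjuncts on the right. Once the shared-universe and shared-predicate observations are in place, the remaining cases are routine.
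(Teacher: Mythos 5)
Your proposal is correct and follows essentially the same route as the paper: reduce the claim by (two-level) structural induction to the single non-trivial case of a data atom $\rels{j}{k}{y}{z}$ under a modality, then settle that case by the zone analysis of Lemma~\ref{lem:r2dv2-semantique}, using Lemma~\ref{lem:shape-balls} to read off membership in $\Ball{1}{a_p}{\AA}$ and $\Ball{2}{a_p}{\AA}$ from the predicates $\udd{p}{i}{j}$. Your treatment is merely more explicit than the paper's about the induction scaffolding and the cross-zone case where both sides are false.
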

\begin{proof}
  Because of the inductive definition of $\T{\phi}$ and that only the atomic formulas $\rels{j}{k}{y}{z}$ change, we only have to prove that given $b,c\in A$, we have $\vprojr{\AA}{a_p}{2}\models\rels{j}{k}{b}{c}$ iff $\AAas\models \Tp{\rels{j}{k}{y}{z}}(b,c)$.
  
	We first suppose that $\vprojr{\AA}{a_p}{2}\models\rels{j}{k}{b}{c}$.
    Using Lemma \ref{lem:r2dv2-semantique},  it implies that $(b,j)$ and $(c,k)$ belong to same set between $\Ball{1}{a_p}{\AA}$, $\Ball{2}{a_p}{\AA} \setminus \Ball{1}{a_p}{\AA}$ and $\comp{\Ball{2}{a_p}{\AA}}$. We proceed by a case analysis.
    \begin{itemize}
	\item If $(b,j),(c,k)\in\Ball{1}{a_p}{\AA}$ then by lemma \ref{lem:r2dv2-semantique}.1 we have that $\AAas\models\phiun(b,c)$ and thus $\AAas\models \Tp{\rels{j}{k}{y}{z}}(b,c)$.
      
	\item If $(b,j),(c,k)\in\Ball{2}{a_p}{\AA} \setminus \Ball{1}{a_p}{\AA}$ then by lemma \ref{lem:r2dv2-semantique}.3 we have that $\AAas\models\phideux(b,c)$ and thus $\AAas\models \Tp{\rels{j}{k}{y}{z}}(b,c)$.
	\item If $(b,j),(c,k)\in\comp{\Ball{2}{a_p}{\AA}}$ then by lemma \ref{lem:r2dv2-semantique}.5 we have that $\AAas\models\phitrois(b,c)$ and thus $\AAas\models \Tp{\rels{j}{k}{y}{z}}(b,c)$.
    \end{itemize}

 We now suppose that $\AAas\models \Tp{\rels{j}{k}{y}{z}}(b,c)$.
	It means that $\AAas$ satisfies at least $\phiun(b,c)$, $\phideux(b,c)$ or $\phitrois(b,c)$.
	If $\AAas\models\phiun(b,c)$, it implies that $(b,j)$ and $(c,k)$ are in $\Ball{1}{a_p}{\AA}$, and we can then apply lemma \ref{lem:r2dv2-semantique}.1 to deduce that $\vprojr{\AA}{a_p}{2}\models\rels{j}{k}{b}{c}$.
	If $\AAas\models\phideux(b,c)$, it implies that $(b,j)$ and $(c,k)$ are in $\Ball{2}{a_p}{\AA} \setminus \Ball{1}{a_p}{\AA}$, and we can then apply lemma \ref{lem:r2dv2-semantique}.3 to deduce that $\vprojr{\AA}{a_p}{2}\models\rels{j}{k}{b}{c}$.
	If $\AAas\models\phitrois(b,c)$, it implies that $(b,j)$ and $(c,k)$ are in $\comp{\Ball{2}{a_p}{\AA}}$, and we can then apply lemma \ref{lem:r2dv2-semantique}.5 to deduce that $\vprojr{\AA}{a_p}{2}\models\rels{j}{k}{b}{c}$.
  \end{proof}
  
  \medskip

To provide a reduction from  $\nDataSat{\eFOr{2}}{2}$ to
$\nDataSat{\ndFOr}{1}$, having the formula $\phit{\phi_{qf}}(x_1,\ldots,x_n)$
is not enough because to use the result of the previous
Lemma, we need to ensure that there exists a model $\BB$ and a tuple
of elements $(a_1,\ldots,a_n)$ such that $\BB \models\
\phit{\phi_{qf}}(a_1,\ldots,a_n)$ and as well that there exists
$\AA\in \nData{2}{\Unary}$ such that $ \BB = \AAas$. We explain now how we
can ensure this last point.

 Now, we want to characterize the structures of the form $\AAas$.
Given $\BB =
(A,(\uP{\unary})_{\unary\in\Unaryp},f)\in\nData{1}{\Unaryp}$ and
$\tuple{a}\in A$, we say that $(\BB,\tuple{a})$ is \emph{well formed}
iff there exists a structure $\AA\in \nData{2}{\Unary}$ such that $ \BB
= \AAas$. Hence $(\BB,\tuple{a})$ is \emph{well formed} iff there
exist two  functions $f_1,f_2:A\to\N$ such that $\AAas=\sem{(A,(\uP{\unary})_{\unary\in\Unary}, f_1,f_2)}_{\tuple{a}}$.
We state three properties on $(\BB,\tuple{a})$, and we will show that they characterize being well formed.
\begin{enumerate}
	\item (Transitivity) For all $b,c\in A$, $p,q \in\{1,\ldots,n\}$,
      $i,j,k,\ell \in\{1,2\}$ if $b\in\uP{\udd{p}{i}{j}}$, $c\in\uP{\udd{p}{i}{\ell}}$ and  $b\in\uP{\udd{q}{k}{j}}$ then $c\in\uP{\udd{q}{k}{\ell}}$.
	\item (Reflexivity) For all $p$ and $i$, we have $a_p\in\uP{\udd{p}{i}{i}}$
	\item (Uniqueness) For all $b\in A$, if $b\in\bigcap_{j=1,2}\bigcup_{p=1,\ldots,n}^{i=1,2} \uP{\udd{p}{i}{j}}$ or $b\notin\bigcup_{j=1,2}\bigcup_{p=1,\ldots,n}^{i=1,2} \uP{\udd{p}{i}{j}}$ then for any $c\in B$ such that $f(c)=f(b)$ we have $c=b$.
\end{enumerate}
Each property can be expressed by a first order logic formula, which
we respectively name $\phitran$, $\phirefl$ and $\phiuniq$ and  we
denote by $\phiwf$ their conjunction:
$$
  \begin{array}{ll}
\phitran &= \forall y \forall z.\Et_{p,q=1}^{n}\Et_{i,j,k,\ell=1}^2 \Big(\udd{p}{i}{j}(y) \et \udd{p}{i}{\ell}(z) \et \udd{q}{k}{j}(y) \donc \udd{q}{k}{\ell}(z)\Big) \\
\phirefl(x_1,\ldots,x_n)  &=\Et_{p=1}^n\Et_{i=1}^2	 \udd{p}{i}{i}(x_p) \\
\phiuniq &= \forall y. \Big(\Et_{j=1}^2 \Ou^n_{p=1} \Ou_{i=1}^2 \udd{p}{i}{j}(y) \ou \Et_{j=1}^2 \Et^n_{p=1}\Et^2_{i=1} \neg\udd{p}{i}{j}(y)\Big) \donc (\forall z. y\sim z \donc y=z)\\
\phiwf(x_1,\ldots,x_n) &=\phitran \et \phirefl(x_1,\ldots,x_n) \et
                         \phiuniq
  \end{array}
  $$

The next lemma expresses that the formula $\phiwf$ allows to
characterise precisely the $1$-data structures in $\nData{1}{\Unaryp}$
which are well-formed.

\begin{lemma}\label{lem:well-formed}
	Let $\BB\in\nData{1}{\Unaryp}$ and $a_1,\ldots,a_n$ elements of
    $\BB$, then $(\BB,\tuple{a})$ is well formed iff $\BB\models\phiwf(\tuple{a})$.
  \end{lemma}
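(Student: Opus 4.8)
The plan is to prove both directions of the equivalence. The forward direction (well formed $\Rightarrow \BB \models \phiwf(\tuple{a})$) is the easier one: assuming $\BB = \AAas$ for some $\AA = (A,(P_\unary),f_1,f_2)$, I would simply verify that each of the three properties holds by unwinding the definitions of $\uP{\udd{p}{i}{j}}$ and $f$. For \emph{reflexivity}, $a_p \in \uP{\udd{p}{i}{i}}$ is immediate since $f_i(a_p) = f_i(a_p)$ gives $\relsaa{i}{i}{\AA}{a_p}{a_p}$. For \emph{transitivity}, the hypotheses $b \in \uP{\udd{p}{i}{j}}$, $c \in \uP{\udd{p}{i}{\ell}}$, $b \in \uP{\udd{q}{k}{j}}$ translate into $f_i(a_p) = f_j(b)$, $f_i(a_p) = f_\ell(c)$, and $f_k(a_q) = f_j(b)$; chaining these equalities yields $f_k(a_q) = f_\ell(c)$, i.e.\ $c \in \uP{\udd{q}{k}{\ell}}$. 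For \emph{uniqueness}, I would check that the membership condition on $b$ characterises exactly those elements on which $f$ is injective by construction: if both fields of $b$ lie in $\Valuessub{\AA}{\tuple{a}}$ then $f(b) = \inj(b)$, and if neither does then again $f(b) = \inj(b)$; in both cases $f(b)$ is a fresh value assigned by the injection $\inj$, so no distinct $c$ can satisfy $f(c) = f(b)$.

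The converse direction (well formed $\Leftarrow \BB \models \phiwf(\tuple{a})$) is the main obstacle, since here I must reconstruct a witnessing $2$-data structure $\AA$ from $\BB$ and verify $\BB = \AAas$. The plan is to define $f_1, f_2 : A \to \N$ explicitly and then check that the resulting $\AAas$ coincides with $\BB$ on both the single data function $f$ and all the predicates $\uP{\udd{p}{i}{j}}$. The natural definition is, for each $b \in A$, to set the value $f_i(b)$ (for $i \in \{1,2\}$) equal to some canonical value determined by the predicates: if $b \in \uP{\udd{p}{\ell}{i}}$ for some $p,\ell$, then $f_i(b)$ should be forced to equal $f_\ell(a_p)$, which in turn I would fix to a canonical representative; otherwise $f_i(b)$ should take a fresh distinct value. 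The role of \emph{transitivity} is precisely to guarantee that this assignment is well defined---that is, when $b$ lies in several such predicates, the forced values are mutually consistent---while \emph{reflexivity} ensures the base values $f_i(a_p)$ are themselves coherent.

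Concretely, I would first use reflexivity and transitivity to show that the relation on index-pairs ``$(p,\ell) \equiv (q,k)$ iff $a_p \in \uP{\udd{q}{k}{\ell}}$'' behaves like an equality of data values, so that I can choose a fresh natural number $v_{[p,\ell]}$ for each equivalence class and define the data of the $a_p$'s accordingly. Then I would extend to arbitrary $b$: assign $f_i(b) := v_{[p,\ell]}$ whenever $b \in \uP{\udd{p}{\ell}{i}}$ (transitivity makes this independent of the chosen $p,\ell$), and assign a pairwise-distinct fresh value to each field not covered by any predicate. The final verification splits exactly along the four categories used in the construction of $\AAas$: for an element with both fields in $\Valuessub{\AA}{\tuple{a}}$ or neither, one checks that $\inj$ and the $f$ defined above agree on producing a fresh value (this is where uniqueness is needed, ensuring $\BB$'s own single value $f(b)$ is fresh and distinct exactly on those elements); for an element with exactly one field in $\Valuessub{\AA}{\tuple{a}}$, one checks that the surviving value $f(b)$ in $\BB$ correctly recovers the other field. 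I expect the bookkeeping in matching the value of $f$ in $\AAas$ against the value of $f$ in $\BB$ across these cases to be the most delicate part, with uniqueness doing the essential work of pinning down the ``otherwise'' case of the definition of $f$.
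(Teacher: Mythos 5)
Your overall strategy coincides with the paper's: the forward direction is handled by unwinding the definitions exactly as you describe, and for the converse the paper also first uses $\phirefl$ and $\phitran$ to turn the predicates into a consistent assignment of values to index pairs $(p,i)$ (your equivalence classes $v_{[p,\ell]}$ are precisely the paper's function $g:\{1,\ldots,n\}\times\{1,2\}\to\N\setminus\im{f}$), then propagates these values to the covered fields of arbitrary elements, and finally invokes $\phiuniq$ to choose the injection $\inj$ so that it agrees with $f$ on the elements whose fields are all covered or all uncovered.

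There is, however, one concrete misstep in your reconstruction of $f_1,f_2$. For an element $b$ with exactly one covered field --- say $(b,j)$ is covered by some $\uP{\udd{p}{i}{j}}$ and $(b,3-j)$ is not --- you propose to ``assign a pairwise-distinct fresh value'' to the uncovered field. This would defeat the verification you yourself announce: by definition of $\AAas$, the single value retained for such a $b$ is exactly its uncovered field, so to obtain $\BB=\AAas$ you must set $f_{3-j}(b):=f(b)$, the value $b$ already carries in $\BB$, which is what the paper does. These values are in general neither fresh nor pairwise distinct across elements: $\phiuniq$ deliberately says nothing about singly-covered elements, precisely because two of them may legitimately share their surviving value (this encodes a data relation between their uncovered fields in the preimage $\AA$). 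With arbitrary pairwise-distinct fresh values instead, the data function of the resulting $\AAas$ disagrees with that of $\BB$ on every such element, so the constructed $\AA$ does not witness well-formedness. The fix is local --- reserve genuinely fresh values for the fields of elements covered by no predicate at all, and reuse $f(b)$ for the uncovered field of a singly-covered element --- after which the rest of your argument goes through as in the paper.
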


  \begin{proof}
    First, if $(\BB,\tuple{a})$ is well formed, then there there exists
$\AA\in \nData{2}{\Unary}$ such that $ \BB = \AAas$ and by
construction we have $\AAas \models\phiwf(\tuple{a})$. We now suppose
that $\BB=(A,(\uP{\unary})_{\unary\in\Unaryp},f)$ and $\BB\models\phiwf(\tuple{a})$.	
	In order to define the functions $f_1,f_2:A\to\N$, we  need
    to introduce some objects.

    We first define a function $g :
    \{1,\ldots,n\} \times \{1,2\} \to \N\setminus \im{f}$ (where
    $\im{f}$ is the image of $f$ in $\BB$) which
      verifies the following properties:
      \begin{itemize}
        \item for all $p \in \{1,\ldots,n\}$ and $i \in \{1,2\}$, we
          have 
          $a_p \in \uP{\udd{p}{i}{3-i}} $ iff $g(p,1)=g(p,2)$;
        \item for all $p, q \in \{1,\ldots,n\}$ and $i,j \in \{1,2\}$,
          we have  $a_q \in \uP{\udd{p}{i}{j}} $ iff $g(p,i)=g(q,j)$.
        \end{itemize}
We use this function to fix the two data values carried by the
elements in $\{a_1,\ldots,a_m\}$. We now explain why this function is
well founded, it is due to the fact
that $\BB\models\phitran \et \phirefl(a_1,\ldots,a_n)$. In fact, since
$\BB \models \phirefl(a_1,\ldots,a_n)$, we have for all $p \in
\{1,\ldots,n\}$ and $i \in \{1,2\}$, $a_p \in \uP{\udd{p}{i}{i}}
$. Furthermore if $a_p \in \uP{\udd{p}{i}{j}}$ then $a_p \in
\uP{\udd{p}{j}{i}}$ thanks to the formula $\phitran$; indeed since we
have $a_p \in  \uP{\udd{p}{i}{j}}$ and $a_p \in  \uP{\udd{p}{i}{i}}$
and  $a_p \in  \uP{\udd{p}{j}{j}}$, we obtain $a_p \in
\uP{\udd{p}{j}{i}}$. Next, we also have that if $a_q \in
\uP{\udd{p}{i}{j}}$ then $a_p \in
\uP{\udd{q}{j}{i}}$ again thanks to $\phitran$;  indeed since we
have $a_q \in  \uP{\udd{p}{i}{j}}$ and $a_p \in  \uP{\udd{p}{i}{i}}$
and  $a_q \in  \uP{\udd{q}{j}{j}}$, we obtain $a_p \in
\uP{\udd{q}{j}{i}}$.

We also need a natural $\dout$ belonging to $\N\setminus
    (\im{g}\cup\im{f})$. For $j \in
    \{1,2\}$, we define $f_j$ as follows for all $b \in A$:
	\[f_j(b) = \left\{\begin{array}{ll}
		g(p,i) & \text{if for some } p,i \text{ we have } b\in\uP{\udd{p}{i}{j}} \\
		f(b) &\text{if for all $p,i$ we have $b\notin\uP{\udd{p}{i}{j}}$ and for some $p,i$ we have $b\in\uP{\udd{p}{i}{3-j}}$} \\
						 \dout &\text{if for all $p,i,j'$, we have } b\notin\uP{\udd{p}{i}{j'}}
	           \end{array}\right.
	\]

Here again, we can show that since $\BB\models\phitran \et
\phirefl(a_1,\ldots,a_n)$, the functions $f_1$ and $f_2$ are well
founded. Indeed, assume that  $b\in\uP{\udd{p}{i}{j}} \cap
  \uP{\udd{q}{k}{j}}$, then we have necessarily that
  $g(p,i)=g(q,k)$. For this we need to show that $a_p \in
  \udd{q}{k}{i}$ and we use again the formula $\phitran$. This can be
  obtained because we have $b\in\uP{\udd{p}{i}{j}}$ and
  $a_p\in\uP{\udd{p}{i}{i}}$ and $b \in \uP{\udd{q}{k}{j}}$.

 We   then define $\AA$ as the $2$-data-structures
$(A,(P_{\unary})_{\unary \in \Unary},\f{1},\f{2})$. It remains to
prove that $\BB = \AAas$. 

First, note  that for all  $b\in A$, $p \in \{1,\ldots,n\}$ and
$i,j\in\{1,2\}$, we have $b\in\uP{\udd{p}{i}{j}}$ iff
$\relsaa{i}{j}{\AA}{a_p}{b}$. Indeed, we have $b\in\uP{\udd{p}{i}{j}}$,
we have that $f_j(b)=g(p,i)$ and since $a_p \in \uP{\udd{p}{i}{j}}$ we
have as well that $f_i(a_p)=g(p,i)$, as a consequence
$\relsaa{i}{j}{\AA}{a_p}{b}$. In the other direction, if
$\relsaa{i}{j}{\AA}{a_p}{b}$, it means that $f_j(b)=f_i(a_p)=g(p,i)$
and thus  $b\in\uP{\udd{p}{i}{j}}$. Now to have $\BB = \AAas$, one has
only to be careful in the choice of  function $\inj$ 
while building $\AAas$. We recall that this function is injective and is
used to give a value to the elements $b \in A$ such that neither
$f_1(b)\in \Valuessub{\AA}{\tuple{a}} \text{ and } f_2(b)\notin
\Valuessub{\AA}{\tuple{a}}$ nor $ f_1(b)\notin
\Valuessub{\AA}{\tuple{a}} \text{ and } f_2(b)\in
\Valuessub{\AA}{\tuple{a}}$. For these elements, we make $\inj$
matches with the function $f$ and the fact that we define an injection
is guaranteed by the formula $\phiuniq$.
\end{proof}

Using the results of Lemma \ref{lem:correct} and
\ref{lem:well-formed}, we deduce that the formula $\phi=\exists x_1\ldots\exists
x_n.\phi_{qf}(x_1,\ldots,x_n)$ of $\eFO{2}{\Unary}{2}$ is satisfiable
iff the formula $\psi=\exists x_1\ldots\exists
x_n.\phit{\phi_{qf}}(x_1,\ldots,x_n) \wedge \phiwf(x_1,\ldots,x_n) $
is satisfiable. Note that $\psi$ can be built in polynomial time from
$\phi$ and that it belongs to $\ndFO{1}{\Unaryp}$. Hence, thanks to
Theorem \ref{thm:1fo}, we obtain that $\nDataSat{\eFOr{2}}{2}$ is in
\textsc{N2EXP}. 

We can as well obtain a matching lower bound thanks to a
reduction from  $\nDataSat{\ndFOr}{1}$. For this matter we rely on two
crucial points. First in the formulas of $\eFO{2}{\Unary}{2}$, there is no
restriction on the use of quantifiers for the formulas located under the scope
of the $\locformr{\cdot}{x}{2}$ modality and consequently we can write
inside this modality a formula of $\ndFO{1}{\Unary}$ without any
modification. Second we can extend a
model $\ndFO{1}{\Unary}$ into a $2$-data structure such that all
elements and their values are located in the same radius-$2$-ball by adding everywhere a second
data value equal to $0$. More formally, let $\phi$ be
a formula in $\ndFO{1}{\Unary}$ and consider the formula $\exists
x.\locformr{\phi}{x}{2}$ where we interpret $\phi$ over $2$-data
structures (this formula simply never mentions the values located in the second
fields). We have then the following lemma.

\begin{lemma} \label{lem:hardness-radius2-2}
 There exists $\AA \in \nData{1}{\Unary}$ such that $\AA \models \phi$
 if and only if there exists $\BB \in  \nData{2}{\Unary}$ such that
 $\BB \models \exists
x.\locformr{\phi}{x}{2}$.
\end{lemma}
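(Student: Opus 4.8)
The plan is to treat the two implications separately, in both cases using that a formula $\phi \in \ndFO{1}{\Unary}$, read over a $2$-data structure, mentions only the unary predicates and the single relation $\rel$ (that is, $\rels{1}{1}{}{}$) and is therefore entirely blind to the second data field. For the forward implication, suppose $\AA = (A,(P_{\unary}),\f{1}) \in \nData{1}{\Unary}$ satisfies $\phi$, witnessed by an interpretation $I$ with $\AA \models_I \phi$. I build $\BB = (A,(P_{\unary}),\f{1},\f{2}) \in \nData{2}{\Unary}$ by keeping the first field and setting $\f{2}(b) = 0$ for every $b \in A$. The decisive observation is that this collapses every radius-$2$-ball onto the whole vertex set: since $\f{2}$ is constant, $\rels{2}{2}{a}{b}$ holds for all $a,b \in A$, so $(b,2) \in \Ball{1}{a}{\BB}$; and as $(b,1)$ and $(b,2)$ are vertices of the same node $b$, they are adjacent in $\gaifmanish{\BB}$, whence $(b,1) \in \Ball{2}{a}{\BB}$. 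Thus $\Ball{2}{a}{\BB} = \Vertex{\gaifmanish{\BB}}$ for every $a \in A$, and the $2$-view relabels nothing, i.e.\ $\vprojr{\BB}{a}{2} = \BB$. A routine induction on $\phi$ --- only the atomic cases $P_{\unary}(y)$ and $\rels{1}{1}{y}{z}$ are non-trivial, and both are interpreted identically in $\AA$ and $\BB$ --- gives $\BB \models_I \phi$. Taking $a = I(x)$ and using $\vprojr{\BB}{a}{2} = \BB$, we get $\BB \models_I \locformr{\phi}{x}{2}$, hence $\BB \models \exists x. \locformr{\phi}{x}{2}$.

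For the converse, suppose $\BB \in \nData{2}{\Unary}$ satisfies $\exists x.\locformr{\phi}{x}{2}$, and fix a witnessing element $a \in A$ together with an interpretation, so that $\vprojr{\BB}{a}{2} \models \phi$. Writing $\vprojr{\BB}{a}{2} = (A,(P_{\unary}),\f{1}',\f{2}')$, I simply discard the second field and set $\AA = (A,(P_{\unary}),\f{1}') \in \nData{1}{\Unary}$. Because $\phi$ never refers to the second field, the relation $\rel$ and the predicates $P_{\unary}$ are interpreted the same way in $\AA$ and in $\vprojr{\BB}{a}{2}$; the same structural induction as above then yields $\AA \models \phi$ from $\vprojr{\BB}{a}{2} \models \phi$.

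The only delicate point, and hence the main obstacle, is the ball computation in the forward direction: one must check that \emph{both} fields of \emph{every} node land in $\Ball{2}{a}{\BB}$, so that the view leaves the structure untouched. This is exactly where radius $2$ is needed rather than radius $1$: at radius $1$ a node $b$ with $\f{1}(b) \notin \{0, \f{1}(a)\}$ would have its first field $(b,1)$ outside $\Ball{1}{a}{\BB}$ and thus relabelled by a fresh value, breaking the correspondence with $\AA$. Everything else is a straightforward induction on the structure of $\phi$.
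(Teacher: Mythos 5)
Your proof is correct and takes essentially the same route as the paper: pad $\AA$ with a constant second data field, observe that every radius-$2$ ball then covers the whole structure so that $\vprojr{\BB}{a}{2}=\BB$, and for the converse simply discard the second field of the $2$-view. In fact you spell out the ball computation (that $(b,2)$ lies at distance $1$ and $(b,1)$ at distance $2$ from $(a,2)$) that the paper dismisses as ``clear'', and your remark on why radius $2$ rather than $1$ is needed is accurate.
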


\begin{proof}
 Assume that there exists  $\AA=(A,(P_{\unary})_{\unary \in
   \Unary},\f{1})$ in $\nData{1}{\Unary}$ such that  $\AA \models
 \phi$. Consider the $2$-data structure $\BB=(A,(P_{\unary})_{\unary \in
   \Unary},\f{1},\f{2})$ such that $\f{2}(a)=0$ for all $a\in
 A$. Let $a \in A$. It is clear that we have $\vprojr{\BB}{a}{2}=\BB$
 and that $\vprojr{\BB}{a}{2} \models \phi$ (because $\AA \models
 \phi$ and $\phi$ never mentions the second values of the elements
 since it is a formula in $\ndFO{1}{\Unary}$ ). Consequently $\BB \models \exists
 x.\locformr{\phi}{x}{2}$.

 Assume now that there exists $\BB=(A,(P_{\unary})_{\unary \in
   \Unary},\f{1},\f{2})$ in $ \nData{2}{\Unary}$ such that  $\BB \models \exists
x.\locformr{\phi}{x}{2}$. Hence there exists $a \in A$ such that
$\vprojr{\BB}{a}{2} \models \phi$, but then by forgetting the second
value in $\vprojr{\BB}{a}{2}$ we obtain a model in $\nData{1}{\Unary}$
which satisfies $\phi$.
\end{proof}
  
Since $\nDataSat{\ndFOr}{1}$ is
\textsc{N2EXP}-hard (see Theorem \ref{thm:1fo}), we obtain the desired lower bound. 

\begin{theorem}\label{thm:radius2-2}
	The problem $\nDataSat{\eFOr{2}}{2}$ is \textsc{N2EXP}-complete.
\end{theorem}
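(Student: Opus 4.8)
The plan is to prove membership in and hardness for \textsc{N2EXP} separately, assembling the machinery developed above into a two-way polynomial reduction between $\nDataSat{\eFOr{2}}{2}$ and $\nDataSat{\ndFOr}{1}$. For the upper bound I would first normalise an input sentence of $\eFO{2}{\Unary}{2}$ into prenex form $\phi = \exists x_1\ldots\exists x_n.\phi_{qf}(x_1,\ldots,x_n)$ with $\phi_{qf}\in\qfFO{2}{\Unary}{2}$; since the fragment has no universal quantifiers and negation only on equalities, pulling all existentials to the front (with variable renaming) is possible and costs only polynomial blow-up. I then form the $\ndFO{1}{\Unaryp}$ sentence $\psi = \exists x_1\ldots\exists x_n.\big(\phit{\phi_{qf}}(x_1,\ldots,x_n)\et\phiwf(x_1,\ldots,x_n)\big)$ over the enlarged signature $\Unaryp=\Unary\cup\Udeci{n}$. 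The goal is to show $\phi$ and $\psi$ are equisatisfiable, noting that the map $\phi\mapsto\psi$ is computable in polynomial time because both the translation $\T{\cdot}$ and the well-formedness formula $\phiwf$ have size polynomial in $\card{\phi}$, $n$ and $\card{\Unary}$.

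The equisatisfiability is exactly where Lemmas \ref{lem:correct} and \ref{lem:well-formed} must be made to interlock, and this is the delicate step. For the forward direction, if $\AA\models\phi$ I pick witnesses $\tuple{a}=(a_1,\ldots,a_n)$ with $\AA\models\phi_{qf}(\tuple{a})$, construct $\AAas$, obtain $\AAas\models\phit{\phi_{qf}}(\tuple{a})$ from Lemma \ref{lem:correct}, and observe that $(\AAas,\tuple{a})$ is well formed by construction, hence $\AAas\models\phiwf(\tuple{a})$ by Lemma \ref{lem:well-formed}; together these give $\AAas\models\psi$. For the converse, if some $\BB\in\nData{1}{\Unaryp}$ together with a tuple $\tuple{a}$ satisfies $\phit{\phi_{qf}}(\tuple{a})\et\phiwf(\tuple{a})$, then Lemma \ref{lem:well-formed} supplies a $2$-data structure $\AA$ with $\BB=\AAas$, and Lemma \ref{lem:correct} transfers satisfaction back to $\AA\models\phi_{qf}(\tuple{a})$, whence $\AA\models\phi$. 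The well-formedness constraints are precisely what rule out $1$-data structures that are not the image of any $\AAas$, turning a one-directional simulation into a genuine reduction. Feeding this polynomial reduction into the \textsc{N2EXP} membership of $\nDataSat{\ndFOr}{1}$ from Theorem \ref{thm:1fo} places $\nDataSat{\eFOr{2}}{2}$ in \textsc{N2EXP}.

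For the matching lower bound I would reduce from $\nDataSat{\ndFOr}{1}$, which is \textsc{N2EXP}-hard by Theorem \ref{thm:1fo}. Given $\phi\in\ndFO{1}{\Unary}$, the reduction sends it to $\exists x.\locformr{\phi}{x}{2}$, interpreted over $2$-data structures; this sentence lies in $\eFO{2}{\Unary}{2}$ precisely because the body under a local modality is allowed arbitrary quantification, so $\phi$ may be placed there verbatim. Lemma \ref{lem:hardness-radius2-2} then states exactly that $\phi$ is satisfiable over $1$-data structures iff $\exists x.\locformr{\phi}{x}{2}$ is satisfiable over $2$-data structures, and the map is trivially polynomial, so hardness transfers. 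Combining the two bounds yields \textsc{N2EXP}-completeness. The main obstacle, as flagged above, is the bookkeeping of the upper-bound equisatisfiability: one must verify that the extra predicates in $\Udeci{n}$ and the conjuncts of $\phiwf$ capture well-formedness faithfully, so that Lemmas \ref{lem:correct} and \ref{lem:well-formed} compose without gaps.
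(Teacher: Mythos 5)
Your proposal is correct and follows essentially the same route as the paper: prenex normalisation, the translation $\T{\cdot}$ together with the well-formedness formula $\phiwf$ combined via Lemmas \ref{lem:correct} and \ref{lem:well-formed} to get a polynomial reduction to $\nDataSat{\ndFOr}{1}$ for the upper bound, and Lemma \ref{lem:hardness-radius2-2} for the matching lower bound from Theorem \ref{thm:1fo}.
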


\subsection{Balls of radius 1 and any number of data values }

Let $D \geq 1$. We first show that  $\nDataSat{\eFOr{1}}{D}$ is in
\textsc{NEXP} by providing a reduction towards
$\nDataSat{\ndFOr}{0}$. This reduction uses the characterisation of
the radius-1-ball provided by Lemma  \ref{lem:shape-balls} and is very
similar to the reduction provided in the previous section.  In fact,
 for an element $b$ located in the radius-1-ball of another
element $a$, we use extra unary predicates to explicit which are the
values of $b$ that are common with the  values of $a$. We provide here
the main step of this reduction whose proof follows the  same line as
the one of Theorem \ref{thm:radius2-2}.

We consider a formula $\phi=\exists x_1\ldots\exists
x_n.\phi_{qf}(x_1,\ldots,x_n)$ of $\eFO{D}{\Unary}{1}$ in prenex normal
form, i.e., such that $\phi_{qf}(x_1,\ldots,x_n)\in\qfFO{D}{\Unary}{1}$. We
know that there is a structure $\AA=(A,(P_{\unary})_{\unary \in
  \Unary},\linebreak[0]\f{1},\f{2},\ldots,\f{D})$ in
$\nData{D}{\Unary}$ such that $\AA\models\phi$ if and only if there
are $a_1,\ldots,a_n \in A $ such that
$\AA\models\phi_{qf}(a_1,\ldots,a_n)$. Let then $\AA=(A,(P_{\unary})_{\unary \in \Unary},\f{1},\f{2},\ldots,\f{D})$ in $\nData{D}{\Unary}$ and a tuple $\tuple{a} = (a_1,\ldots,a_n)$ of elements in $A^n$. Let $\Omega_n=\{\udd{p}{i}{j}\mid p\in\{1,\ldots,n\}, i,j\in\{1,\ldots,D\}\}$ be a set of new unary predicates and $\Unaryp = \Unary \cup \Omega_n$.
For every element $b\in A$, the predicates in $\Omega_n$ are used to keep track of the relation between the data values of $b$ and the one of $a_1,\ldots,a_n$ in $\AA$.
Formally, we have $\uP{\udd{p}{i}{j}}=\{b\in A\mid \AA\models \rels{i}{j}{a_p}{b}\}$.
Finally, we build the $0$-data-structure
$\sem{\AA}'_{\tuple{a}}= (A, (\uP{\unary})_{\unary\in\Unaryp})
$. Similarly to Lemma \ref{lem:r2dv2-semantique}, we have the
following connection between $\AA$ and $\sem{\AA}'_{\tuple{a}}$.

\begin{lemma}\label{lem:r1-semantique}
	Let $b,c\in A$ and $j,k\in\{1,\ldots,D\}$ and $p\in\{1,\ldots,n\}$. The following statements hold:
	\begin{enumerate}
    \item If $(b,j)\in\Ball{1}{a_p}{\AA}$ and $(c,k)\in\Ball{1}{a_p}{\AA}$ then $\vprojr{\AA}{a_p}{1}\models\rels{j}{k}{b}{c}$ iff there is $i\in\{1,2\}$ s.t. $b \in \uP{\udd{p}{i}{j}}$ and $c \in \uP{\udd{p}{i}{k}}$.
		\item If $(b,j)\notin\Ball{1}{a_p}{\AA}$ and $(c,k)\in\Ball{1}{a_p}{\AA}$ then $\vprojr{\AA}{a_p}{1}\nvDash\rels{j}{k}{b}{c}$
		\item If $(b,j)\notin\Ball{1}{a_p}{\AA}$ and $(c,k)\notin\Ball{1}{a_p}{\AA}$ then $\vprojr{\AA}{a_p}{1}\models\rels{j}{k}{b}{c}$ iff $b=c$ and $j=k$.
	\end{enumerate}
\end{lemma}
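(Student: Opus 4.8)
The plan is to follow the proof of Lemma~\ref{lem:r2dv2-semantique} line by line, the radius-$1$ case being strictly simpler since there is no intermediate ``distance-$2$'' layer to treat. I would first record that although Lemma~\ref{lem:shape-balls} is stated for two data values, its part~1 holds verbatim for any number $D$ of data values, since its proof does not rely on there being exactly two: concretely, $(b,j)\in\Ball{1}{a}{\AA}$ iff $\relsaa{i}{j}{\AA}{a}{b}$ for some $i\in\{1,\ldots,D\}$. Writing $\vprojr{\AA}{a_p}{1}=(A,(\uP{\unary})_\unary,\fp_1,\ldots,\fp_D)$, the whole argument then rests on two facts coming directly from the definition of the $r$-view: if $(b,j)\in\Ball{1}{a_p}{\AA}$ the view keeps the original value, $\fp_j(b)=f_j(b)$; and if $(b,j)\notin\Ball{1}{a_p}{\AA}$ the view sets $\fp_j(b)=\inj((b,j))$, a value lying outside $\Values{\AA}$ and, by injectivity of $\inj$, different from every other replaced value.

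For statement~1, I assume $(b,j),(c,k)\in\Ball{1}{a_p}{\AA}$, so $\fp_j(b)=f_j(b)$ and $\fp_k(c)=f_k(c)$. If $\vprojr{\AA}{a_p}{1}\models\rels{j}{k}{b}{c}$ then $f_j(b)=f_k(c)$; the generalised Lemma~\ref{lem:shape-balls}.1 gives some $i$ with $\relsaa{i}{j}{\AA}{a_p}{b}$, i.e.\ $f_i(a_p)=f_j(b)$, and the equality $f_j(b)=f_k(c)$ then yields $\relsaa{i}{k}{\AA}{a_p}{c}$ for the \emph{same} $i$, that is $b\in\uP{\udd{p}{i}{j}}$ and $c\in\uP{\udd{p}{i}{k}}$. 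The converse is immediate: from $b\in\uP{\udd{p}{i}{j}}$ and $c\in\uP{\udd{p}{i}{k}}$ I get $f_j(b)=f_i(a_p)=f_k(c)$, hence $\fp_j(b)=\fp_k(c)$. The only point worth stressing is that the witness $i$ must be common to $b$ and $c$, which is forced since both values equal $f_i(a_p)$.

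Statements~2 and~3 I would prove together by freshness, as in cases~4 and~5 of Lemma~\ref{lem:r2dv2-semantique}. If $(b,j)\notin\Ball{1}{a_p}{\AA}$ then $\fp_j(b)=\inj((b,j))\notin\Values{\AA}$. When $(c,k)\in\Ball{1}{a_p}{\AA}$ we have $\fp_k(c)=f_k(c)\in\Values{\AA}$, so the values differ and $\vprojr{\AA}{a_p}{1}\nvDash\rels{j}{k}{b}{c}$ (statement~2); when $(c,k)\notin\Ball{1}{a_p}{\AA}$ too, both sides are $\inj$-images, so injectivity gives $\fp_j(b)=\fp_k(c)$ iff $(b,j)=(c,k)$, i.e.\ $b=c$ and $j=k$ (statement~3). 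I do not expect any real obstacle: the content is entirely in Lemma~\ref{lem:shape-balls} and the rest is bookkeeping on which values are preserved by the view and which are replaced by fresh $\inj$-images; the one thing to keep straight throughout is whether a given value belongs to $\Values{\AA}$ or is a fresh value.
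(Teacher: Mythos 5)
Your proof is correct and is exactly the adaptation of the proof of Lemma~\ref{lem:r2dv2-semantique} that the paper intends (the paper omits the proof of this lemma, appealing precisely to that analogy): statement~1 via the generalised Lemma~\ref{lem:shape-balls}.1 with a common witness $i$, and statements~2--3 via freshness and injectivity of $\inj$. One minor observation: the index set $\{1,2\}$ in statement~1 should read $\{1,\ldots,D\}$, as both your argument and the formula $\psiun$ confirm.
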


We shall now see how we translate the formula
$\phi_{qf}(x_1,\ldots,x_n)$ into a formula
$\phit{\phi_{qf}}'(x_1,\ldots,x_n)$ in $\ndFO{0}{\Unaryp}$ such that $\AA$
satisfies $\phi_{qf}(a_1,\ldots,a_n)$ if, and only if,
$\sem{\AA}'_{\tuple{a}}$ satisfies
$\phit{\phi_{qf}}(a_1,\ldots,a_n)$. As in the previous section, we
introduce the  following formula in $\ndFO{0}{\Unaryp}$ with $p \in
\{1,\ldots,n\}$ and $j \in \{1,\ldots,D\}$ to test if the $j$-th field of an element belongs to $\Ball{1}{a_p}{\AA}$:
$$
\phiBun{j}(y) := \bigvee_{i \in \{1,\ldots,D\}}\udd{p}{i}{j}(y)
$$

We now present how we  translate atomic formulas of the form  $\rels{j}{k}{y}{z}$ under some $\locformr{-}{x_p}{1}$. For this matter, we rely on two formulas of $\ndFO{0}{\Unaryp}$ which can be described as follows:
\begin{itemize}
\item The first formula asks  for $(y,j)$ and $(z,k)$ to be in $\Ball{1}{a_p}{1}$ (here we abuse notations, using variables for the elements they represent) and for these two data values to coincide with one data value of $a_p$, it corresponds to Lemma \ref{lem:r1-semantique}.1:
  $$
  \psiun(y,z) := \phiBun{j}(y) \et \phiBun{k}(z) \et \Ou^D_ {i=1}\udd{p}{i}{j}(y)\et\udd{p}{i}{k}(z) 
  $$
\item The second formula asks for $(y,j)$ and $(z,k)$ to not belong to $\Ball{1}{a_p}{\AA}$ and for $y=z$, it corresponds to Lemma \ref{lem:r1-semantique}.3:
  $$
  \psideux(y,z) := \begin{cases}
		                                	\bigwedge^D_{i=1} (\neg
                                            \phiBun{i}(y) \wedge \neg  \phiBun{i}(z))  \et y=z  &\text{ if } j=k \\
																			\bot  &\text{ otherwise}
		                                \end{cases}
  $$
\end{itemize}

	Finally, as before we provide an inductive definition of the
    translation $\Tbis{-}$ which uses subtransformations $\Tpbis{-}$ in
    order to remember the centre of the ball and leads to the
    construction of $\phit{\phi_{qf}}'(x_1,\ldots,x_n)$. We only
    detail the case
    $$
    \Tpbis{\rels{j}{k}{y}{z}} =\psiun(y,z) \ou
    \psideux(y,z)
    $$
    as the other cases are identical as for the
    translation $\T{-}$ shown in the previous section. This leads to
    the following lemma (which is the pendant of Lemma
    \ref{lem:correct}).

    \begin{lemma} \label{lem:correct2}
	We have $\AA\models\phi_{qf}(\tuple{a})$ iff $\sem{\AA}'_{\tuple{a}}\models\phit{\phi_{qf}}'(\tuple{a})$.
\end{lemma}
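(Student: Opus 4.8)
The plan is to follow verbatim the strategy of the proof of Lemma~\ref{lem:correct}, only simplified. Since the inductive translation $\Tbis{-}$ acts trivially on unary predicates, equalities, Boolean connectives and quantifiers (these are transported unchanged by $\Tpbis{-}$, exactly as $\Tp{-}$ did in the previous section), a routine structural induction reduces the whole equivalence to a single atomic case. So the first thing I would do is establish, for all $b,c\in A$, all $j,k\in\{1,\ldots,D\}$ and all $p\in\{1,\ldots,n\}$, that
\[
\vprojr{\AA}{a_p}{1}\models\rels{j}{k}{b}{c}
\quad\text{iff}\quad
\sem{\AA}'_{\tuple{a}}\models \Tpbis{\rels{j}{k}{y}{z}}(b,c),
\]
where $\Tpbis{\rels{j}{k}{y}{z}}=\psiun(y,z)\ou\psideux(y,z)$.

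A preliminary observation to record is that, by the characterisation of the radius-$1$-ball (Lemma~\ref{lem:shape-balls}.1) and the definition $\uP{\udd{p}{i}{j}}=\{b\mid \AA\models\rels{i}{j}{a_p}{b}\}$, the guard $\phiBun{j}(b)$ holds in $\sem{\AA}'_{\tuple{a}}$ exactly when $(b,j)\in\Ball{1}{a_p}{\AA}$; this is what makes the region tests inside $\psiun$ and $\psideux$ faithful. For the forward direction I would assume $\vprojr{\AA}{a_p}{1}\models\rels{j}{k}{b}{c}$ and invoke Lemma~\ref{lem:r1-semantique}: its part~2 forbids a related pair from straddling the ball, so $(b,j)$ and $(c,k)$ necessarily lie in the same region, either both in $\Ball{1}{a_p}{\AA}$ or both outside it. In the first case part~1 supplies an $i$ with $b\in\uP{\udd{p}{i}{j}}$ and $c\in\uP{\udd{p}{i}{k}}$, so $\psiun(b,c)$ holds; in the second case part~3 forces $b=c$ and $j=k$, so $\psideux(b,c)$ holds. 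Either way the disjunction is satisfied.

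For the converse I would argue by which disjunct of $\psiun(b,c)\ou\psideux(b,c)$ is true. If $\psiun(b,c)$ holds, its guards place $(b,j),(c,k)$ in $\Ball{1}{a_p}{\AA}$ and its last conjunct provides the common $i$, so Lemma~\ref{lem:r1-semantique}.1 gives the relation; if $\psideux(b,c)$ holds it cannot be $\bot$, whence $j=k$, its negated guards place both fields outside the ball, and $b=c$, so Lemma~\ref{lem:r1-semantique}.3 applies. I do not expect a genuine obstacle here, precisely because this is the radius-$1$ specialisation of the already-established Lemma~\ref{lem:correct}: the only structural change is that the three-region partition of the radius-$2$ proof collapses to two regions, so the intermediate annulus $\Ball{2}{a_p}{\AA}\setminus\Ball{1}{a_p}{\AA}$ and the $\sim$-based disjunct of $\phideux$ disappear entirely. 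The one point still deserving care is the faithfulness of the guards $\phiBun{j}$, that is, applying the radius-$1$ characterisation with matching field indices $j,k$ throughout so that $\psiun$ and $\psideux$ are read against the correct coordinates of $b$ and $c$.
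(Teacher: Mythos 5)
Your overall strategy---reduce by structural induction to the single atomic case and settle that case by the region analysis of Lemma~\ref{lem:r1-semantique}---is exactly the proof the paper intends here (it omits it, presenting the lemma as the ``pendant'' of Lemma~\ref{lem:correct}, whose proof you mirror). Your preliminary observation about the guards, your converse direction, and the ``both inside the ball'' case of the forward direction are all fine.

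There is, however, one step that fails against the formula as printed. In the forward direction, when $(b,j)$ and $(c,k)$ both lie outside $\Ball{1}{a_p}{\AA}$, Lemma~\ref{lem:r1-semantique}.3 gives $b=c$ and $j=k$, and you conclude ``so $\psideux(b,c)$ holds.'' It need not: the guard of $\psideux$ is $\Et_{i=1}^{D}\big(\neg\phiBun{i}(y)\et\neg\phiBun{i}(z)\big)$, which demands that \emph{every} field of $b$ and of $c$ lie outside the ball, and for radius $1$ ball membership genuinely depends on the field index (unlike radius $2$, where Lemma~\ref{lem:shape-balls}.2 makes membership of $(b,j)$ in $\Ball{2}{a_p}{\AA}$ independent of $j$, so the analogous guard in $\phitrois$ is harmless). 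Concretely, take $D=2$ with $\funct{1}(b)=\funct{1}(a_p)$ and $\funct{2}(b)$ fresh: then $\vprojr{\AA}{a_p}{1}\models\rels{2}{2}{b}{b}$, yet $\phiBun{2}(b)$ fails (killing $\psiun(b,b)$) while $\phiBun{1}(b)$ holds (killing the guard of $\psideux(b,b)$), so the claimed atomic equivalence is violated. You explicitly flag ``matching field indices'' as the delicate point but then do not resolve it. To make the argument go through you must read (or repair) the guard as $\neg\phiBun{j}(y)\et\neg\phiBun{k}(z)$, which is what the paper's surrounding prose describes; with that form of $\psideux$ your case analysis is complete and correct.
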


As we had to characterise the well-formed $1$-data structure, a similar trick is necessary here. For  this matter, we use the following
formulas:
$$
  \begin{array}{ll}
\psitran &= \forall y \forall z.\Et_{p,q=1}^{n}\Et_{i,j,k,\ell=1}^D \Big(\udd{p}{i}{j}(y) \et \udd{p}{i}{\ell}(z) \et \udd{q}{k}{j}(y) \donc \udd{q}{k}{\ell}(z)\Big) \\
\psirefl(x_1,\ldots,x_n)  &=\Et_{p=1}^n\Et_{i=1}^D	 \udd{p}{i}{i}(x_p) \\

\psiwf(x_1,\ldots,x_n) &=\psitran \et \psirefl(x_1,\ldots,x_n) 
  \end{array}
  $$

Finally with the same reasoning as the one given in the previous
section, we can show that the formula $\phi=\exists x_1\ldots\exists
x_n.\linebreak[0]\phi_{qf}(x_1,\ldots,x_n)$ of $\eFO{D}{\Unary}{1}$ is satisfiable
iff the formula $\exists x_1\ldots\exists
x_n.\linebreak[0]\phit{\phi_{qf}}'(x_1,\ldots,x_n) \wedge \psiwf(x_1,\ldots,x_n) $
is satisfiable. Note that this latter formula  can be built in polynomial time from
$\phi$ and that it belongs to $\ndFO{0}{\Unaryp}$. Hence, thanks to
Theorem \ref{thm:0fo}, we obtain that $\nDataSat{\eFOr{1}}{D}$ is in
\textsc{NEXP}. The matching lower bound is as well obtained the same
way by reducing $\nDataSat{\ndFOr}{0}$ to $\nDataSat{\eFOr{1}}{D}$
showing that a formula $\phi$ in $\ndFO{0}{\Unary}$ is satisfiable
iff the formula $\exists
x.\locformr{\phi}{x}{1}$ in $\eFO{1}{\Unary}{D}$ is satisfiable.
  
\begin{theorem}
	For all $D \geq 1$, the problem $\nDataSat{\eFOr{1}}{D}$ is \textsc{NEXP}-complete.
\end{theorem}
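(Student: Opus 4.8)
The plan is to prove both bounds by assembling the reduction already set up above. For the \textsc{NEXP} upper bound I would fix an input sentence, put it in prenex form $\phi = \exists x_1 \ldots \exists x_n.\phi_{qf}$ with $\phi_{qf} \in \qfFO{D}{\Unary}{1}$, and show that $\phi$ is satisfiable over $\nData{D}{\Unary}$ if and only if the sentence $\psi = \exists x_1 \ldots \exists x_n.\big(\phit{\phi_{qf}}'(x_1,\ldots,x_n) \et \psiwf(x_1,\ldots,x_n)\big)$ of $\ndFO{0}{\Unaryp}$ is satisfiable over $\nData{0}{\Unaryp}$. The forward implication is immediate from Lemma~\ref{lem:correct2}: a model $\AA \models \phi_{qf}(\tuple{a})$ yields $\sem{\AA}'_{\tuple{a}} \models \phit{\phi_{qf}}'(\tuple{a})$, and $\sem{\AA}'_{\tuple{a}} \models \psiwf(\tuple{a})$ holds by construction of the predicates $\udd{p}{i}{j}$. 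Since $\psi$ is computable in polynomial time and belongs to $\ndFO{0}{\Unaryp}$, Theorem~\ref{thm:0fo} then places the problem in \textsc{NEXP}.

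The step I expect to be the crux is the converse implication, which requires characterising the $0$-data structures that arise as $\sem{\AA}'_{\tuple{a}}$; this is the radius-$1$ analogue of Lemma~\ref{lem:well-formed}. I would prove that for $\BB = (A,(\uP{\unary})_{\unary \in \Unaryp}) \in \nData{0}{\Unaryp}$ and $\tuple{a} \in A^n$, there is some $\AA \in \nData{D}{\Unary}$ with $\BB = \sem{\AA}'_{\tuple{a}}$ if and only if $\BB \models \psiwf(\tuple{a})$. The nontrivial direction reconstructs the data functions $f_1,\ldots,f_D$ of $\AA$ from $\BB$: I would first pin down the coordinates of $a_1,\ldots,a_n$ through an auxiliary map $g : \{1,\ldots,n\} \times \{1,\ldots,D\} \to \N$ satisfying $g(p,i) = g(q,j)$ exactly when $a_q \in \uP{\udd{p}{i}{j}}$, and then set $f_j(b) = g(p,i)$ whenever $b \in \uP{\udd{p}{i}{j}}$, giving every remaining coordinate a fresh pairwise-distinct value. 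Reflexivity ($\psirefl$) makes the values of the $a_p$ consistent, and transitivity ($\psitran$) guarantees that $g$ and the $f_j$ are well defined, by the same derivation as in Lemma~\ref{lem:well-formed} (from $b \in \uP{\udd{p}{i}{j}} \cap \uP{\udd{q}{k}{j}}$ and reflexivity one deduces $a_p \in \uP{\udd{q}{k}{i}}$). The essential simplification over the radius-$2$ construction is that no uniqueness clause is needed here: the abstract structure retains no data value at all, so coordinates avoiding $\Valuessub{\AA}{\tuple{a}}$ may be assigned arbitrary fresh values, and by Lemma~\ref{lem:r1-semantique} these are never compared inside a radius-$1$ ball. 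Feeding this characterisation back through Lemma~\ref{lem:correct2} closes the equisatisfiability of $\phi$ and $\psi$.

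For \textsc{NEXP}-hardness I would reduce from $\nDataSat{\ndFOr}{0}$, which is \textsc{NEXP}-hard by Theorem~\ref{thm:0fo}, mirroring Lemma~\ref{lem:hardness-radius2-2}. Given $\phi \in \ndFO{0}{\Unary}$ I map it to $\exists x.\locformr{\phi}{x}{1} \in \eFO{D}{\Unary}{1}$, which is a polynomial-time transformation. Since $\phi$ contains no data relation, padding any $0$-data model to a $D$-data structure with arbitrary data leaves $\phi$ satisfied in every radius-$1$ view $\vprojr{\BB}{a}{1}$ (the view alters only data values, not the unary predicates or the universe), so the two sides are equisatisfiable. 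Combining the two bounds gives \textsc{NEXP}-completeness.
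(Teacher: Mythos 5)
Your proposal is correct and follows essentially the same route as the paper: the same reduction to $\nDataSat{\ndFOr}{0}$ via the predicates $\udd{p}{i}{j}$, the translation $\Tbis{-}$, and the well-formedness formula $\psiwf$ (where you rightly note that no uniqueness clause is needed), together with the same lower bound via $\exists x.\locformr{\phi}{x}{1}$ from $\nDataSat{\ndFOr}{0}$. The radius-$1$ analogue of Lemma~\ref{lem:well-formed} that you spell out is exactly the step the paper leaves implicit, and your argument for it matches the paper's intended reasoning.
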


\section{Undecidability results}
\label{sec:undecidability}
We show here $\nDataSat{\eFOr{3}}{2}$ and  $\nDataSat{\eFOr{2}}{3}$
are undecidable. To obtain this we provide reductions from
$\nDataSat{\ndFOr}{2}$ and we use the fact that any
2-data structure can be interpreted as a radius-3-ball of a 2-data structure or
respectively  as a radius-2-ball of a 3-data structure.

\subsection{Radius 3 and two data values}

In order to reduce $\nDataSat{\ndFOr}{2}$ to
$\nDataSat{\eFOr{3}}{2}$, we show that 
we can transform slightly any $2$-data structure $\AA$ into an other
2-data structure $\AAge$ such that $\AAge$ corresponds to the
radius-3-ball of any element of $\AAge$  and this transformation has
some kind of inverse. Furthermore, given a formula $\phi \in
\ndFO{2}{\Unary}$, we transform it into a formula $T(\phi)$ in
$\eFO{2}{\Unary'}{3}$ such that $\AA$ satisfies $\phi$ iff $\AAge$
satisfies $T(\phi)$ . What follows is the formalisation of this reasoning.

Let  $\AA=(A,(\uP{\unary})_{\unary},\funi,\funo)$ be a $2$-data
structure in $\nData{2}{\Unary}$ and  $\uge$ be a fresh unary
predicate not in  $\Unary$. From $\AA$ we build the following $2$-data structure
$\AAge=(A',(\uP{\unary}')_{\unary},\funi',\funo')\in\nData{2}{\Unary\cup\{\uge\}}$
such that:
\begin{itemize}
	\item $A' = A \uplus \Values{\AA}\times\Values\AA$, 
	\item for $i\in\{1,2\}$ and $a\in A$, $f_i'(a)=f_i(a)$  and for $(d_1,d_2)\in  \Values{\AA}\times\Values\AA$, $f_i((d_1,d_2))=d_i$,
	\item for $\unary\in\Unary$, $\uP{\unary}'=\uP{\unary}$,
	\item $\uP{\uge}=\Values{\AA}\times\Values\AA$.
    \end{itemize}
Hence to build $\AAge$ from $\AA$ we have added to the elements of
$\AA$ all pairs of data presented in $\AA$ and in order to recognise
these new elements in the structure we use the new unary predicate
$\uge$. We add these extra elements to ensure that all the elements of
the structure are located in   the
radius-3-ball of any element of $\AAge$.
We have then the following property.

\begin{lemma}\label{lem:ge-has-radius-3}
 $\vprojr{\AAge}{a}{3}=\AAge$ for all $a \in A'$.
\end{lemma}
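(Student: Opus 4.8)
The plan is to reduce the claimed equality of structures to a purely graph-theoretic statement about the data graph $\gaifmanish{\AAge}$, namely that for every $a \in A'$ the radius-3-ball $\Ball{3}{a}{\AAge}$ already exhausts the whole vertex set $\Vertex{\gaifmanish{\AAge}} = A' \times \{1,2\}$. Once this is shown, the equality $\vprojr{\AAge}{a}{3}=\AAge$ follows immediately from the definition of the view: the view keeps the same universe and the same unary predicates, and it only relabels a data value at a position $(b,i)$ when $(b,i) \notin \Ball{3}{a}{\AAge}$. If there is no such position, the injection $\inj$ is never used, every $f_i'$ is preserved verbatim, and the two structures literally coincide.

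The core idea I would exploit is that the pair elements of $A'$ act as \emph{universal connectors}. Observe first that every data value occurring in $\AAge$ belongs to $\Values{\AA}$, and that for any $d,d' \in \Values{\AA}$ the pair $(d,d') \in \uP{\uge} \subseteq A'$ is an element whose first field carries $d$ and whose second field carries $d'$. Now fix $a \in A'$ and an arbitrary target vertex $(b,k)$ with $b \in A'$ and $k \in \{1,2\}$. Writing $e := f_1'(a)$ and $d := f_k'(b)$ (both in $\Values{\AA}$, regardless of whether $a$ and $b$ are original nodes or pair elements), consider the connector $p := (e,d) \in A'$. It yields the length-3 path
\[ (a,1) \to (p,1) \to (p,2) \to (b,k) \]
in $\gaifmanish{\AAge}$: the first edge exists because $f_1'(a) = e = f_1'(p)$, the second because $(p,1)$ and $(p,2)$ share the underlying element $p$, and the third because $f_2'(p) = d = f_k'(b)$. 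Hence $\distaa{(a,1)}{(b,k)}{\AAge} \leq 3$, so $(b,k) \in \Ball{3}{a}{\AAge}$.

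Since $(b,k)$ was arbitrary, this establishes $\Ball{3}{a}{\AAge} = A' \times \{1,2\}$ for every $a \in A'$, and the view equality follows as described in the first paragraph. I do not expect a genuine obstacle: the only point needing care is to confirm that $f_1'(a)$ and $f_k'(b)$ always lie in $\Values{\AA}$, so that the connector pair $(e,d)$ really is an element of $A'$ — this is immediate from the two-case definition of $f_i'$ on $A$ and on $\Values{\AA}\times\Values{\AA}$. The freedom in choosing the starting field (here $e = f_1'(a)$, but the second field would work equally well) reflects exactly why radius $3$, and not less, is what the construction guarantees.
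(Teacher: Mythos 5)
Your proof is correct and follows essentially the same route as the paper: both arguments use a pair element $(f_i'(a), f_k'(b)) \in \uP{\uge}$ as a connector to produce the length-3 path $(a,i),(c,1),(c,2),(b,k)$ in $\gaifmanish{\AAge}$. The only cosmetic difference is that you fix the starting field to $1$ (which suffices, since membership in $\Ball{3}{a}{\AAge}$ only requires the distance bound for \emph{some} $i$) and you spell out explicitly why covering the whole vertex set yields the equality of structures, which the paper leaves implicit.
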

\begin{proof}
	Let $b\in A'$ and $i,j \in \{1,2\}$. We show that
    $\distaa{(a,i)}{(b,j)}{\AAge}\leq 3$. i.e. that there is a path of length at most 3 from $(a,i)$ to $(b,j)$ in the data graph $\gaifmanish{\AAge}$.
	By construction of $\AAge$, there is an element $c\in A'$ such that $f_1(c)=f_i(a)$ and $f_2(c)=f_j(b)$.
	So we have the path $(a,i),(c,1),(c,2),(b,j)$ of length at most 3 from $(a,i)$ to $(b,j)$ in $\gaifmanish{\AAge}$.
\end{proof}

Conversely, to $\AA=(A,(\uP{\unary})_{\unary},\funi,\funo)\in\nData{2}{\Unary\cup\{\uge\}}$, we associate $\AAminusge=(A',(\uP{\unary}')_{\unary},\funi',\funo')\in\nData{2}{\Unary}$ where:
\begin{itemize}
	\item $A' = A \setminus \uP{\uge}$, 
	\item for $i\in\{1,2\}$ and $a\in A'$, $f_i'(a)=f_i(a)$,
	\item for $\unary\in\Unary$, $\uP{\unary}'=\uP{\unary}'\setminus \uP{\uge}$.
\end{itemize}

Finally we inductively translate any formula $\phi\in\ndFO{2}{\Unary}$ into $T(\phi)\in\ndFO{2}{\Unary\cup\{\uge\}}$ by making it quantify over elements not labeled with $\uge$:  $T(\unary(x)) = \unary(x)$, $T(\rels{i}{j}{x}{y})=\rels{i}{j}{x}{y}$, $T( x=y )= (x=y) $, $T(\exists x.\phi)=\exists x. \neg \uge(x) \wedge T(\phi)$, $T( \vp \vee \vp')=T(\vp) \vee T(\vp')$ and $T(\neg \vp)=\neg T(\vp)$. 

\begin{lemma}\label{lem:ge-vs-without}
	Let $\phi$ be a sentence in $\ndFO{2}{\Unary}$,
    $\AA\in\nData{2}{\Unary}$ and $\BB \in
    \nData{2}{\Unary\cup\{\uge\}}$. The two following properties hold:
 \begin{itemize}
	 \item $\AA\models\phi$ iff $\AAge\models T(\phi)$
  \item $\minusge{\BB} \models\phi$ iff $\BB\models T(\phi)$.
 \end{itemize}
\end{lemma}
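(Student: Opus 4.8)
The plan is to observe that both statements are instances of a single \emph{relativisation} fact, and that the first bullet reduces to the second once we unfold the constructions. Concretely, I would first note that for \emph{any} $\mathfrak{C}=(C,(\uP{\unary})_{\unary},\funi,\funo)\in\nData{2}{\Unary\cup\{\uge\}}$, the reduct $\minusge{\mathfrak{C}}$ has universe $C\setminus\uP{\uge}$, keeps both data functions, and has predicates $\uP{\unary}\setminus\uP{\uge}$ for $\unary\in\Unary$; hence on the surviving universe $C\setminus\uP{\uge}$ every $\Unary$-predicate and both data values of $\minusge{\mathfrak{C}}$ agree with those of $\mathfrak{C}$. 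Since $T$ simply guards every quantifier by $\neg\uge(x)$, this is the classical relativisation situation, and the core claim to establish is: for every such $\mathfrak{C}$ and every sentence $\phi\in\ndFO{2}{\Unary}$, we have $\minusge{\mathfrak{C}}\models\phi$ iff $\mathfrak{C}\models T(\phi)$.

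The second bullet is then exactly this core claim instantiated at $\mathfrak{C}=\BB$. For the first bullet I would check, by unfolding the two definitions, that $\minusge{\AAge}=\AA$: the elements added to form $\AAge$ are precisely the pairs in $\Values{\AA}\times\Values{\AA}$, they are disjoint from $A$, they lie in no $\Unary$-predicate, and they are exactly the elements labelled $\uge$; so removing the $\uge$-labelled elements undoes the addition and restores $A$ with its original predicates and data. Applying the core claim at $\mathfrak{C}=\AAge$ and substituting $\minusge{\AAge}=\AA$ yields $\AA\models\phi$ iff $\AAge\models T(\phi)$.

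To prove the core claim I would strengthen it to open formulas so that the induction goes through: for every $\phi\in\ndFO{2}{\Unary}$ and every interpretation $I\colon\Var\to C\setminus\uP{\uge}$, we have $\minusge{\mathfrak{C}}\models_I\phi$ iff $\mathfrak{C}\models_I T(\phi)$. I would then induct on the structure of $\phi$. The atomic cases $\Pform{\unary}{x}$, $\rels{i}{j}{x}{y}$ and $x=y$ are immediate: $T$ leaves them unchanged, and since $I$ maps into $C\setminus\uP{\uge}$, the relevant predicate memberships and data values coincide in $\mathfrak{C}$ and $\minusge{\mathfrak{C}}$. The Boolean cases follow directly from the induction hypothesis because $T$ commutes with $\vee$ and $\neg$.

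The only case requiring care is $\phi=\exists x.\phi'$, where $T(\phi)=\exists x.(\neg\uge(x)\wedge T(\phi'))$. Here $\mathfrak{C}\models_I T(\phi)$ holds iff there is $a\in C$ with $a\notin\uP{\uge}$ and $\mathfrak{C}\models_{\Intrepl{x}{a}}T(\phi')$, i.e.\ iff there is $a\in C\setminus\uP{\uge}$ with $\mathfrak{C}\models_{\Intrepl{x}{a}}T(\phi')$: the guard $\neg\uge(x)$ restricts the witness to exactly the universe $C\setminus\uP{\uge}$ of $\minusge{\mathfrak{C}}$. Since $\Intrepl{x}{a}$ still maps into $C\setminus\uP{\uge}$, the induction hypothesis applies and rewrites the inner condition as $\minusge{\mathfrak{C}}\models_{\Intrepl{x}{a}}\phi'$, so the two existential statements match. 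This matching of quantifier ranges is the heart of the argument; the rest is bookkeeping, and the main point to be careful about is that the induction must be stated for open formulas (although the lemma concerns sentences, the quantifier case unavoidably produces a subformula with a free variable) and that the interpretation must be kept inside $C\setminus\uP{\uge}$ throughout.
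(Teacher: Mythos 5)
Your proposal is correct and follows essentially the same route as the paper: both reduce the first bullet to the second via the identity $\minusge{(\addge{\AA})}=\AA$, strengthen the claim to open formulas with interpretations ranging over the $\uge$-free universe, and carry out a structural induction whose only delicate case is the existential quantifier, where the guard $\neg\uge(x)$ matches the restricted quantifier range.
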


\begin{proof}
	As for any $\AA\in\nData{2}{\Unary}$ we have $\minusge{(\addge{\AA})} = \AA$, it is sufficient to prove the second point.
	We reason by induction on $\phi$.
	Let $\AA=(A,(\uP{\unary})_{\unary},\funi,\funo)\in\nData{2}{\Unary\cup\{\uge\}}$ and let $\AAminusge=(A',(\uP{\unary}')_{\unary},\funi',\funo')\in\nData{2}{\Unary}$. 
	The inductive hypothesis is that for any formula $\phi\in\ndFO{2}{\Unary}$ (closed or not) and any context interpretation function $I: \Var \to A'$ we have $\AAminusge \models_I \phi \text{ iff } \AA \models_I T(\phi)$.
	Note that the inductive hypothesis is well founded in the sense that the interpretation $I$ always maps variables to elements of the structures.
	
	We prove two cases: when $\phi$ is a unary predicate and when
    $\phi$ starts by an existential quantification, the other cases
    being similar. First, assume that $\phi = \unary(x)$ where $\unary\in\Unary$.
	$\AAminusge \models_I \unary(x)$ holds iff $I(x)\in\uP{\unary}'$. 
	As $I(x)\in A\setminus \uP{\uge}$, we have $I(x)\in\uP{\unary}'$ iff $I(x)\in\uP{\unary}$, which is equivalent to  $\AA \models_I T(\unary(x))$ .
	Second assume $\phi = \exists x.\phi'$.
	Suppose that $\AAminusge \models_I \exists x.\phi'$.
	Thus, there is a $a\in A'$ such that $\AAminusge \models_\Intrepl{x}{a} \phi'$.
	By inductive hypothesis, we have $\AA\models_\Intrepl{x}{a} T(\phi')$.
	As $a\in A' = A\setminus \uP{\uge}$, we have $\AA\models_\Intrepl{x}{a} \neg\uge(x)$, so $\AA\models_I \exists x. \neg\uge(x)\et T(\phi')$ as desired.
	Conversely, suppose that $\AA \models_I T(\exists x.\phi') $.
	It means that there is a $a\in A$ such that $\AA \models_\Intrepl{x}{a}\neg\uge(x)\et T(\phi')$.
	So we have that $a\in A'=A\setminus \uP{\uge}$, which means that $\Intrepl{x}{a}$ takes values in $A$ and we can apply the inductive hypothesis to get that 
	$\AAminusge \models_\Intrepl{x}{a} \phi'$.
	So we have $\AAminusge \models_I \exists x.\phi'$.
\end{proof}

From Theorem \ref{thm:undec-general}, we know that
$\nDataSat{\ndFOr}{2}$ is undecidable. From  a closed formula
$\phi\in\ndFO{2}{\Unary}$, we build the formula $\exists
x.\locformr{T(\phi)}{x}{3}\in\eFO{2}{\Unary\cup\{\uge\}}{3}$. Now if
$\phi$ is satisfiable, it means that there exists $\AA\in
\nData{2}{\Unary}$  such that $\AA\models\phi$. By Lemma
\ref{lem:ge-vs-without}, $\AAge\models T(\phi)$. Let $a$ be an element
of $\AA$, then thanks to Lemma \ref{lem:ge-has-radius-3}, we have $\vprojr{\AAge}{a}{3}\models T(\phi)$.
 Finally by definition of our logic, $\AAge\models\exists x.\locformr{T(\phi)}{x}{3}$.
 So $\exists x.\locformr{T(\phi}{x}{3}$ is satisfiable. Now assume
 that $\exists x.\locformr{T(\phi)}{x}{3}$ is satisfiable. So there
 exist $\AA \in \nData{2}{\Unary\cup\{\uge\}}$ and an element $a$ of
 $\AA$ such that $\vprojr{\AA}{a}{3}\models T(\phi)$.
 Using  Lemma \ref{lem:ge-vs-without}, we obtain
 $(\vprojr{\AA}{a}{3})_{\setminus\uge}\models\phi$. Hence $\phi$ is
 satisfiable. This shows that we can reduce $\nDataSat{\ndFOr}{2}$ to $\nDataSat{\eFOr{3}}{2}$ .

\begin{theorem}\label{thm:undec-existential-r3}
	The problem $\nDataSat{\eFOr{3}}{2}$ is undecidable.
\end{theorem}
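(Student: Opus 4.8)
The plan is to reduce $\nDataSat{\ndFOr}{2}$, which is undecidable by Theorem \ref{thm:undec-general}, to $\nDataSat{\eFOr{3}}{2}$ using the two constructions already introduced. Given a sentence $\phi \in \ndFO{2}{\Unary}$, I would map it to the sentence $\exists x.\locformr{T(\phi)}{x}{3} \in \eFO{2}{\Unary\cup\{\uge\}}{3}$, which is plainly computable from $\phi$. The design rationale is that the padded structure $\AAge$ adds all pairs in $\Values{\AA}\times\Values{\AA}$ (marked by $\uge$) precisely so that any two fields $(a,i)$ and $(b,j)$ are joined by a length-$3$ path $(a,i),(c,1),(c,2),(b,j)$ through a suitable pair $c$; hence the $3$-view of any element equals the whole of $\AAge$ by Lemma \ref{lem:ge-has-radius-3}, collapsing the local modality to plain global satisfaction, while $T$ relativises all quantifiers to the non-$\uge$ elements so that $\phi$ is effectively read off the original part.

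For the forward direction, I would assume $\AA \models \phi$ for some $\AA \in \nData{2}{\Unary}$. The first item of Lemma \ref{lem:ge-vs-without} gives $\AAge \models T(\phi)$, and then for any element $a$ of $\AAge$ Lemma \ref{lem:ge-has-radius-3} yields $\vprojr{\AAge}{a}{3} = \AAge \models T(\phi)$; by the semantics of $\locformr{\cdot}{x}{3}$ this means $\AAge \models \exists x.\locformr{T(\phi)}{x}{3}$, so the target formula is satisfiable.

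For the converse, I would take a witness $\BB \in \nData{2}{\Unary\cup\{\uge\}}$ together with an element $a$ such that $\vprojr{\BB}{a}{3} \models T(\phi)$. Since $\vprojr{\BB}{a}{3}$ is still a structure over $\Unary\cup\{\uge\}$, the second item of Lemma \ref{lem:ge-vs-without} applies and gives $(\vprojr{\BB}{a}{3})_{\setminus\uge} \models \phi$, so $\phi$ is satisfiable. Together the two directions show that the reduction is correct, which yields the claimed undecidability.

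I do not anticipate a genuine obstacle, as the two supporting lemmas carry essentially all the weight. The one point that needs care is the backward direction: the model of the local formula lives over the extended signature $\Unary\cup\{\uge\}$ and need not itself be of the form $\AAge$, so one must invoke the second item of Lemma \ref{lem:ge-vs-without} applied to $\vprojr{\BB}{a}{3}$ (rather than the first item) in order to strip the $\uge$-padding and recover a genuine model of $\phi$ over $\Unary$.
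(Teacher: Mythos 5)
Your proposal is correct and follows exactly the paper's own argument: the same reduction via $\exists x.\locformr{T(\phi)}{x}{3}$, with the forward direction combining Lemma~\ref{lem:ge-vs-without} and Lemma~\ref{lem:ge-has-radius-3}, and the backward direction applying the second item of Lemma~\ref{lem:ge-vs-without} to the $3$-view of the witness. The point you flag about needing the second item (since the witness structure need not be of the form $\AAge$) is precisely how the paper handles that direction.
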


\subsection{Radius 2 and three data values}

We provide here a reduction from $\nDataSat{\ndFOr}{2}$ to
$\nDataSat{\eFOr{2}}{3}$. The idea is similar to the one used in the
proof of Lemma \ref{lem:hardness-radius2-2} to show that
$\nDataSat{\eFOr{2}}{2}$ is \textsc{N2EXP}-hard by reducing
$\nDataSat{\ndFOr}{1}$. Indeed we have the following Lemma.

\begin{lemma} 
 Let $\phi$ be
a formula in $\ndFO{2}{\Unary}$. There exists $\AA \in \nData{2}{\Unary}$ such that $\AA \models \phi$
 if and only if there exists $\BB \in  \nData{3}{\Unary}$ such that
 $\BB \models \exists
x.\locformr{\phi}{x}{2}$.
\end{lemma}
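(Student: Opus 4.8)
The plan is to mimic the proof of Lemma~\ref{lem:hardness-radius2-2}, replacing the device that forced everything into a single ball. There, a constant \emph{second} field equal to $0$ collapsed distances in a $2$-data structure so that a radius-$2$ view coincided with the whole structure; here I would add a constant \emph{third} field equal to $0$ to a $2$-data structure, so that every field of every element again lands in the radius-$2$-ball of any chosen centre. The extra data value is precisely what compensates for the step from radius $2$ and two values to radius $2$ and three values.

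First I would establish the key geometric fact: if $\BB=(A,(\uP{\unary}),\f{1},\f{2},\f{3}) \in \nData{3}{\Unary}$ satisfies $\f{3}(b)=0$ for all $b\in A$, then $\Ball{2}{a}{\BB}=\Vertex{\datagraph{\BB}}$ for every $a\in A$, and hence $\vprojr{\BB}{a}{2}=\BB$. Indeed, for any $b\in A$ and $j\in\{1,2,3\}$, since $\f{3}(a)=0=\f{3}(b)$ we have $\rels{3}{3}{a}{b}$, so $((a,3),(b,3))$ is an edge of $\datagraph{\BB}$; and $((b,3),(b,j))$ is an edge because it connects two fields of the same node. Thus $(a,3),(b,3),(b,j)$ is a path of length at most $2$, giving $\distaa{(a,3)}{(b,j)}{\BB}\le 2$ and therefore $(b,j)\in\Ball{2}{a}{\BB}$. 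Since no field lies outside the radius-$2$-ball, the view $\vprojr{\BB}{a}{2}$ replaces no value and equals $\BB$.

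For the forward direction, from $\AA=(A,(\uP{\unary}),\f{1},\f{2})\in\nData{2}{\Unary}$ with $\AA\models\phi$ I would build $\BB=(A,(\uP{\unary}),\f{1},\f{2},\f{3})$ with $\f{3}$ constant equal to $0$. Picking any $a\in A$, the fact above gives $\vprojr{\BB}{a}{2}=\BB$. Since $\phi\in\ndFO{2}{\Unary}$ mentions only the relations $\relsaord{i}{j}$ with $i,j\in\{1,2\}$, besides unary predicates and equality, its truth value in $\BB$ depends only on the first two fields, which coincide with those of $\AA$; hence $\BB\models\phi$, so $\vprojr{\BB}{a}{2}\models\phi$ and $\BB\models\exists x.\locformr{\phi}{x}{2}$. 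For the converse, from $\BB\in\nData{3}{\Unary}$ with $\BB\models\exists x.\locformr{\phi}{x}{2}$ I would take a witness $a$ with $\vprojr{\BB}{a}{2}\models\phi$ and drop the third field of $\vprojr{\BB}{a}{2}$ to obtain a $2$-data structure $\AA\in\nData{2}{\Unary}$; as $\phi$ ignores the third field, $\AA\models\phi$.

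The argument is routine once the geometric fact is in place, so the only point requiring care is the choice of radius. A constant field collapses the distance between the \emph{third} fields of any two nodes to at most $1$, but reaching an arbitrary field $(b,j)$ from the centre still needs one extra hop inside node $b$; so radius $2$ — not $1$ — is exactly what the construction delivers, which is why three data values are needed to embed $\nDataSat{\ndFOr}{2}$ into the radius-$2$ fragment. Combined with the undecidability of $\nDataSat{\ndFOr}{2}$ (Theorem~\ref{thm:undec-general}), this equivalence yields undecidability of $\nDataSat{\eFOr{2}}{3}$.
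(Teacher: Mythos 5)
Your proof is correct and follows essentially the same route as the paper's: pad the $2$-data structure with a constant third field equal to $0$, observe that every field then lies in the radius-$2$-ball of any centre so the $2$-view is the whole structure, and use that $\phi$ never mentions the third field. The only difference is that you spell out the path $(a,3),(b,3),(b,j)$ witnessing $\vprojr{\BB}{a}{2}=\BB$, which the paper simply asserts as clear.
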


\begin{proof}
 Assume that there exists  $\AA=(A,(P_{\unary})_{\unary \in
   \Unary},\f{1},\f{2})$ in $\nData{2}{\Unary}$ such that  $\AA \models
 \phi$.Consider the $3$-data structure $\BB=(A,(P_{\unary})_{\unary \in
   \Unary},\f{1},\f{2},\f{3})$ such that $\f{3}(a)=0$ for all $a\in
 A$. Let $a \in A$. It is clear that we have $\vprojr{\BB}{a}{2}=\BB$
 and that $\vprojr{\BB}{a}{2} \models \phi$ (because $\AA \models
 \phi$ and $\phi$ never mentions the third values of the elements
 since it is a formula in $\ndFO{1}{\Unary}$). Consequently $\BB
 \models \exists
 x.\locformr{\phi}{x}{2}$.

 Assume now that there exists $\BB=(A,(P_{\unary})_{\unary \in
   \Unary},\f{1},\f{2},\f{3})$ in $ \nData{3}{\Unary}$ such that  $\BB \models \exists
x.\locformr{\phi}{x}{2}$. Hence there exists $a \in A$ such that
$\vprojr{\BB}{a}{2} \models \phi$, but then by forgetting the third
value in $\vprojr{\BB}{a}{2}$ we obtain a model in $\nData{3}{\Unary}$
which satisfies $\phi$.
\end{proof}

Using Theorem \ref{thm:undec-general}, we obtain the following result.
\begin{theorem}\label{thm:undec-existential-r2}
	The problem $\nDataSat{\eFOr{2}}{3}$ is undecidable.
\end{theorem}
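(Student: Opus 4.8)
The plan is to obtain undecidability by a straightforward many-one reduction from $\nDataSat{\ndFOr}{2}$, which is undecidable by Theorem~\ref{thm:undec-general}. In fact all the technical work has already been carried out in the preceding lemma, so the theorem will follow essentially as a corollary.

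First I would fix a sentence $\phi \in \ndFO{2}{\Unary}$ and map it to the sentence $\exists x.\locformr{\phi}{x}{2}$. The only thing to verify about this map is that its image lands in the intended fragment: since $\phi$ uses only the relations $\relsaord{i}{j}$ with $i,j \in \{1,2\}$, it is equally a formula of $\ndFO{3}{\Unary}$ with at most one free variable, hence a legitimate body for a radius-$2$ local modality interpreted over $3$-data structures. Consequently $\exists x.\locformr{\phi}{x}{2}$ is a sentence of $\eFO{3}{\Unary}{2}$, and the map is computable in linear time.

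Next I would invoke the preceding lemma for correctness: it states precisely that there is $\AA \in \nData{2}{\Unary}$ with $\AA \models \phi$ if and only if there is $\BB \in \nData{3}{\Unary}$ with $\BB \models \exists x.\locformr{\phi}{x}{2}$. Thus $\phi$ is satisfiable over $2$-data structures iff its image is satisfiable over $3$-data structures, which is exactly the equivalence the reduction requires. Combining this with the undecidability of $\nDataSat{\ndFOr}{2}$ then yields undecidability of $\nDataSat{\eFOr{2}}{3}$.

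The genuinely nontrivial step is the lemma itself, and within it the observation that padding every element with the constant third value $0$ makes the resulting structure coincide with its own radius-$2$ view: every field $(b,j)$ lies within distance $2$ of the third field of $a$ along the path $(a,3),(b,3),(b,j)$ (the first edge because all third values equal $0$, the second because $(b,3)$ and $(b,j)$ share the node $b$), so $(b,j)\in\Ball{2}{a}{\BB}$ and no genuine value is ever replaced when forming $\vprojr{\BB}{a}{2}$. The converse direction simply projects away the third field, using that $\phi$, being a $2$-data formula, never inspects it. Since that lemma is already established, the remaining argument for the theorem is pure bookkeeping, and I anticipate no further obstacle.
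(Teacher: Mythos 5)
Your proposal is correct and follows exactly the paper's route: a reduction from $\nDataSat{\ndFOr}{2}$ mapping $\phi$ to $\exists x.\locformr{\phi}{x}{2}$, with correctness delegated to the preceding padding lemma (constant third value $0$, so that the structure equals its own radius-$2$ view) and undecidability imported from Theorem~\ref{thm:undec-general}. Your explicit path $(a,3),(b,3),(b,j)$ justifying $\Ball{2}{a}{\BB}=\Vertex{\gaifmanish{\BB}}$ is a correct and slightly more detailed account of the step the paper states without elaboration.
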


\bibliographystyle{eptcs}
\bibliography{biblio}

\begin{thebibliography}{10}
\providecommand{\bibitemdeclare}[2]{}
\providecommand{\surnamestart}{}
\providecommand{\surnameend}{}
\providecommand{\urlprefix}{Available at }
\providecommand{\url}[1]{\texttt{#1}}
\providecommand{\href}[2]{\texttt{#2}}
\providecommand{\urlalt}[2]{\href{#1}{#2}}
\providecommand{\doi}[1]{doi:\urlalt{https://doi.org/#1}{#1}}
\providecommand{\eprint}[1]{arXiv:\urlalt{https://arxiv.org/abs/#1}{#1}}
\providecommand{\bibinfo}[2]{#2}

\bibitemdeclare{article}{BojanczykDMSS11}
\bibitem{BojanczykDMSS11}
\bibinfo{author}{M.~\surnamestart Boja{{n}}czyk\surnameend},
  \bibinfo{author}{C.~\surnamestart David\surnameend},
  \bibinfo{author}{A.~\surnamestart Muscholl\surnameend},
  \bibinfo{author}{T.~\surnamestart Schwentick\surnameend} \&
  \bibinfo{author}{L.~\surnamestart Segoufin\surnameend}
  (\bibinfo{year}{2011}): \emph{\bibinfo{title}{Two-variable logic on data
  words}}.
\newblock {\slshape \bibinfo{journal}{{ACM} Trans. Comput. Log.}}
  \bibinfo{volume}{12}(\bibinfo{number}{4}), pp. \bibinfo{pages}{27:1--27:26},
  \doi{10.1145/1970398.1970403}.

\bibitemdeclare{article}{BojanczykMSS09}
\bibitem{BojanczykMSS09}
\bibinfo{author}{M.~\surnamestart Bojanczyk\surnameend},
  \bibinfo{author}{A.~\surnamestart Muscholl\surnameend},
  \bibinfo{author}{T.~\surnamestart Schwentick\surnameend} \&
  \bibinfo{author}{L.~\surnamestart Segoufin\surnameend}
  (\bibinfo{year}{2009}): \emph{\bibinfo{title}{Two-variable logic on data
  trees and {XML} reasoning}}.
\newblock {\slshape \bibinfo{journal}{J. {ACM}}}
  \bibinfo{volume}{56}(\bibinfo{number}{3}), \doi{10.1145/1516512.1516515}.

\bibitemdeclare{inproceedings}{bollig-local-fsttcs21}
\bibitem{bollig-local-fsttcs21}
\bibinfo{author}{Benedikt \surnamestart Bollig\surnameend},
  \bibinfo{author}{Arnaud \surnamestart Sangnier\surnameend} \&
  \bibinfo{author}{Olivier \surnamestart Stietel\surnameend}
  (\bibinfo{year}{2021}): \emph{\bibinfo{title}{Local First-Order Logic with
  Two Data Values}}.
\newblock In: {\slshape \bibinfo{booktitle}{{FSTTCS}'21}}, {\slshape
  \bibinfo{series}{LIPIcs}} \bibinfo{volume}{213}, \bibinfo{publisher}{Schloss
  Dagstuhl - Leibniz-Zentrum f{\"{u}}r Informatik}, pp.
  \bibinfo{pages}{39:1--39:15}, \doi{10.4230/LIPIcs.FSTTCS.2021.39}.

\bibitemdeclare{book}{borger-classical-springer97}
\bibitem{borger-classical-springer97}
\bibinfo{author}{Egon \surnamestart B{\"{o}}rger\surnameend},
  \bibinfo{author}{Erich \surnamestart Gr{\"{a}}del\surnameend} \&
  \bibinfo{author}{Yuri \surnamestart Gurevich\surnameend}
  (\bibinfo{year}{1997}): \emph{\bibinfo{title}{The Classical Decision
  Problem}}.
\newblock \bibinfo{series}{Perspectives in Mathematical Logic},
  \bibinfo{publisher}{Springer}, \doi{10.1023/A:1008334715902}.

\bibitemdeclare{inproceedings}{DeckerHLT14}
\bibitem{DeckerHLT14}
\bibinfo{author}{N.~\surnamestart Decker\surnameend},
  \bibinfo{author}{P.~\surnamestart Habermehl\surnameend},
  \bibinfo{author}{M.~\surnamestart Leucker\surnameend} \&
  \bibinfo{author}{D.~\surnamestart Thoma\surnameend} (\bibinfo{year}{2014}):
  \emph{\bibinfo{title}{Ordered Navigation on Multi-attributed Data Words}}.
\newblock In \bibinfo{editor}{Paolo \surnamestart Baldan\surnameend} \&
  \bibinfo{editor}{Daniele \surnamestart Gorla\surnameend}, editors: {\slshape
  \bibinfo{booktitle}{{CONCUR}'14}}, {\slshape \bibinfo{series}{Lecture Notes
  in Computer Science}} \bibinfo{volume}{8704}, \bibinfo{publisher}{Springer},
  pp. \bibinfo{pages}{497--511}, \doi{10.1007/978-3-662-44584-6\_34}.

\bibitemdeclare{article}{etessami-first-ic02}
\bibitem{etessami-first-ic02}
\bibinfo{author}{Kousha \surnamestart Etessami\surnameend},
  \bibinfo{author}{Moshe~Y. \surnamestart Vardi\surnameend} \&
  \bibinfo{author}{Thomas \surnamestart Wilke\surnameend}
  (\bibinfo{year}{2002}): \emph{\bibinfo{title}{First-Order Logic with Two
  Variables and Unary Temporal Logic}}.
\newblock {\slshape \bibinfo{journal}{Inf. Comput.}}
  \bibinfo{volume}{179}(\bibinfo{number}{2}), pp. \bibinfo{pages}{279--295},
  \doi{10.1006/inco.2001.2953}.

\bibitemdeclare{article}{Fitting12}
\bibitem{Fitting12}
\bibinfo{author}{Melvin \surnamestart Fitting\surnameend}
  (\bibinfo{year}{2012}): \emph{\bibinfo{title}{Torben Bra{\"{u}}ner, Hybrid
  Logic and its Proof-Theory, Applied Logic Series Volume 37, Springer, 2011,
  pp. {XIII+231.} {ISBN:} 978-94-007-0001-7}}.
\newblock {\slshape \bibinfo{journal}{Stud Logica}}
  \bibinfo{volume}{100}(\bibinfo{number}{5}), pp. \bibinfo{pages}{1051--1053},
  \doi{10.1007/s11225-012-9439-2}.

\bibitemdeclare{inproceedings}{Gai82}
\bibitem{Gai82}
\bibinfo{author}{H.~\surnamestart Gaifman\surnameend} (\bibinfo{year}{1982}):
  \emph{\bibinfo{title}{On local and nonlocal properties}}.
\newblock In \bibinfo{editor}{J.~\surnamestart Stern\surnameend}, editor:
  {\slshape \bibinfo{booktitle}{Logic Colloquium '81}},
  \bibinfo{publisher}{North-Holland}, pp. \bibinfo{pages}{105--135},
  \doi{10.1016/S0049-237X(08)71879-2}.

\bibitemdeclare{incollection}{Han65}
\bibitem{Han65}
\bibinfo{author}{W.~\surnamestart Hanf\surnameend} (\bibinfo{year}{1965}):
  \emph{\bibinfo{title}{Model-theoretic methods in the study of elementary
  logic}}.
\newblock In \bibinfo{editor}{J.W. \surnamestart Addison\surnameend},
  \bibinfo{editor}{L.~\surnamestart Henkin\surnameend} \&
  \bibinfo{editor}{A.~\surnamestart Tarski\surnameend}, editors: {\slshape
  \bibinfo{booktitle}{The Theory of Models}}, \bibinfo{publisher}{North
  Holland}, pp. \bibinfo{pages}{132--145}, \doi{10.2307/2271017}.

\bibitemdeclare{article}{Janiczak-Undecidability-fm53}
\bibitem{Janiczak-Undecidability-fm53}
\bibinfo{author}{A.~\surnamestart Janiczak\surnameend} (\bibinfo{year}{1953}):
  \emph{\bibinfo{title}{Undecidability of some simple formalized theories}}.
\newblock {\slshape \bibinfo{journal}{Fundamenta Mathematicae}}
  \bibinfo{volume}{40}, pp. \bibinfo{pages}{131--139}, \doi{10.2307/2964197}.

\bibitemdeclare{inproceedings}{KaraSZ10}
\bibitem{KaraSZ10}
\bibinfo{author}{A.~\surnamestart Kara\surnameend},
  \bibinfo{author}{T.~\surnamestart Schwentick\surnameend} \&
  \bibinfo{author}{T.~\surnamestart Zeume\surnameend} (\bibinfo{year}{2010}):
  \emph{\bibinfo{title}{Temporal Logics on Words with Multiple Data Values}}.
\newblock In \bibinfo{editor}{Kamal \surnamestart Lodaya\surnameend} \&
  \bibinfo{editor}{Meena \surnamestart Mahajan\surnameend}, editors: {\slshape
  \bibinfo{booktitle}{{FSTTCS}'10}}, {\slshape
  \bibinfo{series}{LIPIcs}}~\bibinfo{volume}{8}, \bibinfo{publisher}{Schloss
  Dagstuhl - Leibniz-Zentrum f{\"{u}}r Informatik}, pp.
  \bibinfo{pages}{481--492}, \doi{10.4230/LIPIcs.FSTTCS.2010.481}.

\bibitemdeclare{inproceedings}{Kieronski05}
\bibitem{Kieronski05}
\bibinfo{author}{E.~\surnamestart Kieronski\surnameend} (\bibinfo{year}{2005}):
  \emph{\bibinfo{title}{Results on the Guarded Fragment with Equivalence or
  Transitive Relations}}.
\newblock In \bibinfo{editor}{C.{-}H.~Luke \surnamestart Ong\surnameend},
  editor: {\slshape \bibinfo{booktitle}{{CSL}'05}}, {\slshape
  \bibinfo{series}{Lecture Notes in Computer Science}} \bibinfo{volume}{3634},
  \bibinfo{publisher}{Springer}, pp. \bibinfo{pages}{309--324},
  \doi{10.1007/11538363\_22}.

\bibitemdeclare{inproceedings}{KieronskiT09}
\bibitem{KieronskiT09}
\bibinfo{author}{E.~\surnamestart Kieronski\surnameend} \&
  \bibinfo{author}{L.~\surnamestart Tendera\surnameend} (\bibinfo{year}{2009}):
  \emph{\bibinfo{title}{On Finite Satisfiability of Two-Variable First-Order
  Logic with Equivalence Relations}}.
\newblock In: {\slshape \bibinfo{booktitle}{{LICS}'09}},
  \bibinfo{publisher}{{IEEE}}, pp. \bibinfo{pages}{123--132},
  \doi{10.1109/LICS.2009.39}.

\bibitemdeclare{book}{Libkin04}
\bibitem{Libkin04}
\bibinfo{author}{L.~\surnamestart Libkin\surnameend} (\bibinfo{year}{2004}):
  \emph{\bibinfo{title}{Elements of Finite Model Theory}}.
\newblock \bibinfo{series}{Texts in Theoretical Computer Science. An {EATCS}
  Series}, \bibinfo{publisher}{Springer}, \doi{10.1007/978-3-662-07003-1}.

\bibitemdeclare{inproceedings}{ManuelZ13}
\bibitem{ManuelZ13}
\bibinfo{author}{A.~\surnamestart Manuel\surnameend} \&
  \bibinfo{author}{T.~\surnamestart Zeume\surnameend} (\bibinfo{year}{2013}):
  \emph{\bibinfo{title}{Two-Variable Logic on 2-Dimensional Structures}}.
\newblock In \bibinfo{editor}{Simona Ronchi~Della \surnamestart
  Rocca\surnameend}, editor: {\slshape \bibinfo{booktitle}{{CSL}'13}},
  {\slshape \bibinfo{series}{LIPIcs}}~\bibinfo{volume}{23},
  \bibinfo{publisher}{Schloss Dagstuhl - Leibniz-Zentrum f{\"{u}}r Informatik},
  pp. \bibinfo{pages}{484--499}, \doi{10.4230/LIPIcs.CSL.2013.484}.

\bibitemdeclare{article}{Mundhenk09}
\bibitem{Mundhenk09}
\bibinfo{author}{Martin \surnamestart Mundhenk\surnameend} \&
  \bibinfo{author}{Thomas \surnamestart Schneider\surnameend}
  (\bibinfo{year}{2009}): \emph{\bibinfo{title}{The Complexity of Hybrid Logics
  over Equivalence Relations}}.
\newblock {\slshape \bibinfo{journal}{J. Log. Lang. Inf.}}
  \bibinfo{volume}{18}(\bibinfo{number}{4}), pp. \bibinfo{pages}{493--514},
  \doi{10.1007/s10849-009-9089-6}.

\bibitemdeclare{article}{Tan14}
\bibitem{Tan14}
\bibinfo{author}{T.~\surnamestart Tan\surnameend} (\bibinfo{year}{2014}):
  \emph{\bibinfo{title}{Extending two-variable logic on data trees with order
  on data values and its automata}}.
\newblock {\slshape \bibinfo{journal}{{ACM} Trans. Comput. Log.}}
  \bibinfo{volume}{15}(\bibinfo{number}{1}), pp. \bibinfo{pages}{8:1--8:39},
  \doi{10.1145/2559945}.

\end{thebibliography}
\end{document}